\theoremstyle{plain}
\newtheorem{lemma}{Lemma}
\newtheorem{definition}{Definition}
\newtheorem{proposition}{Proposition}
\newtheorem{problem}{Problem}
\newtheorem{example}{Example}
\DeclareMathOperator{\wt}{wt}
\DeclareMathOperator{\F}{\mathbb F}
\DeclareMathOperator{\bbM}{\mathbb M}
\newcommand{\call}[1]{\mbox{\texttt{#1}}}  
\renewcommand{\ell}{l}
\newcommand{\set}[1]{\left\{{#1}\right\}}
\newcommand{\mA}{\mathcal{A}}
\newcommand{\mC}{\mathcal{C}}
\newcommand{\mD}{\mathcal{D}}
\newcommand{\mF}{\mathcal{F}}
\newcommand{\mI}{\mathcal{I}}
\newcommand{\mK}{\mathcal{K}}
\newcommand{\mL}{\mathcal{L}}
\newcommand{\mS}{\mathcal{S}}
\newcommand{\mV}{\mathcal{V}}
\newcommand{\mW}{\mathcal{W}}
\newcommand{\bF}{\mathbb{F}}
\newcommand{\bI}{\mathbb{I}}
\newcommand{\bK}{\mathbb{K}}
\newcommand{\bM}{\mathbb{M}}
\newcommand{\bfK}{\mathbf{K}}
\newcommand{\bfR}{\mathbf{R}}
\newcommand{\bfT}{\mathbf{T}}
\begin{document}

\title{Fast Search Method for Large Polarization Kernels}
\author{
\IEEEauthorblockN{Grigorii Trofimiuk}
\thanks{G. Trofimiuk is with ITMO University, Saint-Petersburg, Russia.
 E-mail: gtrofimiuk@itmo.ru
 
This work was supported by the Russian Science Foundation,
Grant No. 22-11-00208. 
 
 This work was partially presented at the International Symposium on Information Theory'2021.

 The source code is available at 
 
 \url{https://github.com/gtrofimiuk/KernelBruteforcer}.}}

\maketitle
\begin{abstract}
A novel search method for large polarization kernels is proposed. The algorithm produces a kernel with given partial distances by employing the depth-first search combined with the computation of coset leaders weight tables and sufficient conditions of code non-equivalence. Using the proposed method, we improved all existing lower bounds on the maximum error exponent for kernels of size from 17 to 29.

We also obtained kernels which admit low complexity processing by the recently proposed recursive trellis algorithm. Numerical results demonstrate the advantage of polar codes with the obtained kernels compared with shortened polar codes and polar codes with small kernels. 
\end{abstract}
\begin{IEEEkeywords}
Polar codes, polarization kernels, error exponent.
\end{IEEEkeywords}
\section{Introduction}
Polar codes proposed by Ar{\i}kan \cite{Arikan2009channel} achieve the symmetric capacity of a binary-input memoryless channel  $W$. They have low complexity construction, encoding, and decoding algorithms. However, their finite-length performance turns out to be quite poor.  

One way to address this problem is to replace the Ar{\i}kan matrix by a larger matrix referred to as \textit{polarization kernel} \cite{korada2010polar} having better polarization properties. Moreover, polar codes with large kernels were shown to provide an asymptotically optimal scaling exponent \cite{fazeli2018binary}.

Several fundamental bounds on the error exponent together with various kernel constructions have recently been derived \cite{korada2010polar}, \cite{presman2015binary}, \cite{lin2015linear}, \cite{trofimiuk2021shortened}. Unfortunately, the construction of a polarization kernel of an arbitrary length with the best polarization properties remains an intractable problem. In this paper we are focusing on finding the kernels with as large error exponent as possible. To do this, we propose the search algorithm for polarization kernels with given \textit{partial distances}. This algorithm is based on depth-first search combined with the computation of coset leaders weight tables and sufficient conditions of code non-equivalence. This allows us to find polarization kernels of sizes $17 \leq l \leq 29$ with a better error exponent compared to those constructed in \cite{korada2010polar}, \cite{presman2015binary}. \nocite{trofimiuk2021search}

We also demonstrate the application of the proposed approach to the construction of polarization kernels which admit low complexity processing by the recently proposed recursive trellis algorithm \cite{trifonov2021recursive, trifonov2019trellis}. Simulation results show that polar codes with the obtained large kernels provide significant performance gain compared with shortened polar codes and polar codes with small kernels. 
 
The paper is organized as follows. The background on polarization kernels is presented in Section 
\ref{sBackground}. The basic search algorithm is introduced in Section \ref{sBasicSearch}.  Section \ref{sEnhanced} presents a novel polarization kernel search method, which employs the computation of coset leader weight tables and code equivalence invariants. Application of the proposed algorithm for finding the polarization kernels with the best known error exponent is discussed in Section \ref{sMaximization}. Some polarization kernels, which admit low complexity processing, are provided in Section \ref{sKernComplexity}. Simulation results are presented in Section \ref{sNumberic}.

\section{Background}
\label{sBackground}
\subsection{Notations}
For a positive integer $n$, we denote by $[n]$ the set $\{0,1,\dots ,n-1\}$. The vector $u_a^b$ is a subvector $(u_a,u_{a+1},\dots,u_b)$ of a vector $u$. For vectors $a$ and $b$ we denote their concatenation by $a.b$. For a matrix $M$ we denote its $i$-th row by $M_i$, and $M_{i}^j$ is a submatrix of $M$, which contains rows $M_i, M_{i+1}, \cdots, M_j$. For matrices $A$ and $B$ we denote their Kronecker product by $A \otimes B$. By $A^{\otimes m}$ we denote the $m$-fold Kronecker product of matrix $A$  with itself. 
\subsection{Polarizing transformation}
\label{ss:pt}

 A \textit{polarization kernel} $K$ is a binary invertible $l \times l$ matrix, which is not upper-triangular under any column permutation \cite{korada2010polar}.  An $(n, k)$ mixed-kernel polar code \cite{presman2016mixed}, \cite{bioglio2020multikernel} is a linear block code generated by $k$ rows of matrix 
$$G_n = \mK_1 \otimes \mK_2\otimes ... \otimes \mK_s,$$ where $\mK_i$ is an $l_i \times l_i$ polarization kernel and $n = \prod_{i=1}^s l_i$.

The encoding scheme of a polar code is given by 
$$c_0^{n-1}= u_0^{n-1}G_n,$$ where $u_i,i\in \mathcal F$ are set to some pre-defined values, e.g. zero (frozen symbols),  $|\mF| = n - k$, and the remaining $k$ values $u_i$ are set to the payload data. 

\subsection{Fundamental parameters of polar codes}
\subsubsection{Error exponent}
Let $W: \{0,1\} \to \mathcal{Y}$ be a symmetric binary-input discrete memoryless channel (B-DMC) with capacity $I(W)$. By definition,
$$
I(W) = \sum_{y \in \mathcal Y} \sum_{x \in \set{0,1}} \frac{1}{2}W(y|x) \log \frac{W(y|x)}{\frac{1}{2}W(y|0)+\frac{1}{2}W(y|1)}.
$$

The Bhattacharyya parameter of $W$ is 
$$Z(W) = \sum_{y\in\mathcal{Y}} \sqrt{W(y|0)W(y|1)}.$$

Consider the polarizing transform $K^{\otimes m}$ based on the $l \times l$ kernel $K$. The matrix $K^{\otimes m}$ together with $n$ copies of $W$ induces bit subchannels $W_{m}^{(i)}(y_0^{n-1},u_0^{i-1}|u_i)$. Let 
$$Z_m^{(i)} = Z(W_{m}^{(i)}(y_0^{n-1},u_0^{i-1}|u_i))$$ be the Bhattacharyya parameter of the $i$-th subchannel, where $i$ is uniformly distributed on the set $[l^m]$. Then, for any B-DMC $W$ with $0<I(W)<1$, we say that an $\ell\times\ell$ matrix $K$ has the error exponent (also known as the rate of polarization) $E(K)$ if \cite{korada2010polar}
\begin{itemize}
\item[(i)] For any fixed $\beta < E(K)$,
\[
\lim_{m \to \infty} \Pr[Z_m \leq 2^{-\ell^{m\beta}}] = I(W).
\]
\item[(ii)] For any fixed $\beta > E(K)$,
\[
\lim_{m\to\infty} \Pr[Z_m \geq 2^{-\ell^{m\beta}}] = 1.
\]
\end{itemize}

Suppose we construct an $(n,k)$ polar code $\mathcal C$ with kernel $K$ by selecting the frozen set $\mF$ as a set of $n-k$ indices with the highest $Z_m^{(i)}, i \in [n]$. Let $P_e(\mC, W)$ be the block error probability of $\mathcal C$ under transmission over $W$ and decoding by the successive cancellation (SC) algorithm. It was proven \cite{korada2010polar}, that if $n/k <\ I(W)$ and $\beta < E(K)$, then for sufficiently large $n$ the probability $P_e(n)$ can be bounded as
$$
P_e(\mC, W) \leq 2^{-n^\beta}.
$$ 

The error exponent is independent of $W$. Given a matrix $G$ of size $k \times n$ and rank $k$, by $\langle G \rangle$ we denote a linear code, generated by $G$. Let $d(a,b)$ be the Hamming distance between vectors $a$ and $b$. Let 
$$d(b,\mathcal C) = \min\limits_{c\in \mathcal C} d(b,c)$$ be a minimum distance between a vector $b$ and a linear  code $\mathcal C$. 

The \textit{partial distances} (PD) $\mD_i, i  \in [l]$, of the $l\times l$ kernel $K$ are defined as follows:
\begin{align}
\label{fPDDef}
\mD_i &= d(K_i,\langle K_{i+1}^{l-1}\rangle), i \in [l-1],\\
\mD_{l-1} &= d(K_{l-1},\mathbf 0).
\end{align}

The vector $\mD$ is referred to as a \textit{partial distances profile} (PDP). It was proven in \cite{korada2010polar} that for any B-DMC $W$ and any $l\times l$ polarization kernels $K$ with a PDP $\mD$, the error exponent $E(K)$ is given by 
\begin{align}
E(K) = \frac{1}{l}\sum^{l-1}_{i=0}\log_l \mD_i.
\label{f:Rate}
\end{align}

Observe that by increasing kernel dimension $l$ to infinity, it is possible to obtain $l\times l$ kernels with error exponent arbitrarily close to 1  \cite{korada2010polar}. The explicit constructions of kernels with high error exponent are provided in \cite{presman2015binary,lin2015linear}.
\subsubsection{Scaling exponent}
Another crucial polarization property is \textit{scaling exponent}. Let us fix a B-DMC channel $W$ of capacity $I(W)$ and a desired block error probability $P_e$. Given $W$ and $P_e$, suppose we wish to communicate at rate $I(W) - \Delta$ using a family of $(n,k)$ polar codes with kernel $K$. The value of $n$ scales as $O(\Delta^{- \mu(K)})$, the constant $\mu(K)$ is referred  to as the scaling exponent \cite{hassani2014finitelength}. The scaling exponent depends on the channel, and the algorithm of its computation is only known for the case of the binary erasure channel \cite{hassani2014finitelength, fazeli2014scaling}. Further results on scaling behaviour analysis are presented in \cite{guruswami2015polar,mondelli2016unified,wang2021polar,guruswami2022arikan}.

It can be proven \cite{pfister2016nearoptimal,fazeli2021binary} that there exist kernels $K^{(l)}$ of size $l$, such that $\lim\limits_{l\rightarrow \infty}\mu(K^{(l)})=2,$ i.e. the corresponding polar codes provide an optimal scaling behaviour. Constructions of the kernels with good scaling exponent are provided in \cite{yao2019explicit}.

\section{Basic search algorithm}
\label{sBasicSearch}

\subsection{Motivations of the proposed approach}
Finite length performance of polar codes can be improved by improving the polarization parameters of the constituent kernels. In this work we are focusing on the error exponent, since  it is independent of a channel and can be explicitly computed. 
\begin{problem}
\label{fExpProbl}
Find an $l \times l$ polarization kernel with the maximal error exponent, i.e. obtain \begin{equation}
\label{fExpTask}
K^{\star} = \arg \max_{K \in \bfK_l} E(K), 
\end{equation}
where $\bfK_l$ is a set of all $l \times l$ binary polarization kernels. By $E_l^{\star}$ we denote $E(K^{\star})$.
\end{problem}
Problem \ref{fExpProbl} was solved for kernels of length from $2$ to $16$ \cite{lin2015linear}. For kernels of greater size there are only various upper bounds on the error exponent \cite{korada2010polar, presman2015binary, lin2015linear}.  Obviously, the direct solution of Problem \ref{fExpProbl} is infeasible. Hence, in this work our goal is to improve the lower bound $E_l^{\star}$ by finding $l \times l$ kernels with error exponent as large as possible.

Let us denote a length $l$ PDP\ $\mD$ as \textit{valid} if there exists an $l \times l$ kernel with PDP $\mD$. Hence, the set $\bfK_l$ can be represented as the following disjoint union of sets
\begin{equation}
\bfK_l = \bigcup_{\mD \in \mathbf D_l} \bfK_l(\mD), 
\end{equation}
where $\mathbf D_l$ is a set of all valid $l$-length PDPs, and $\bfK_l(\mD)$ is a set of $l \times l$ kernels with PDP $\mD$. Such a representation of $\bfK_l$ implies the following  
approach for the approximate solution of problem \ref{fExpProbl}. Let $\widehat E_l$ be the maximal error exponent among known $l\times l$ kernels. For instance, we can combine the results from \cite{trofimiuk2021shortened} and \cite{presman2015binary}. Then

\begin{enumerate}
\item Generate the set 
$\widetilde{\mathbf D}_l$ of possibly valid PDPs $\mD$ with 
$$\frac{1}{l}\sum^{l-1}_{i=0}\log_l \mD_i >\ \widehat E_l;$$
 
\item For each $\mD \in \widetilde{\mathbf D}_l$ try to construct at least one kernel $K$ with PDP $\mD.$
\end{enumerate}
The description of the proposed solution for the second task starts in Section \ref{ssBasic} and continues in Section \ref{sEnhanced}. We used the statement  ``possibly valid" in the sense that only necessary conditions
to check whether a PDP $\mD$ is valid are known, which is described in detail in Section \ref{ssBounds}. 
\subsection{Basic algorithm}
\label{ssBasic}
For a given $l\times l$ kernel $K$ we define a sequence of 
$(l, l-\phi, d_K^{(\phi)})$ \textit{kernel codes} 
$\mC_{K}^{(\phi)} = \langle K_{\phi}^{l-1} \rangle$, $\phi \in [l], $ and $\mC_K^{(l)}$ contains only zero codeword. 

Suppose that we want to obtain an $l \times l$ polarization kernel $K$ with given PDP $\mD$. Our idea is to successively construct such a kernel starting from row $K_{l-1}$ and ending by the row $K_0$. To do this, we define the sets $\mathbb M_\phi, \phi \in [l]$ of \textit{candidate rows}, i.e. for each obtained $K$ its rows $K_{\phi} \in \mathbb M_\phi$. 

\begin{proposition}[\cite{fazeli2014scaling}]
\label{pAddRow}
Let $K$ and $\widehat K$ be $l \times l$ polarization kernels. If 
$\widehat K = XKP$,  where $X$ is a row addition matrix, corresponding to the addition of row $i$ to row $j < i$, and $P$ is a permutation matrix, then $E(\widehat K) = E(K)$ and $\mu(\widehat K) = \mu(K)$. Furthermore, $\mC_{\widehat K}^{(i)}$ is equivalent to $\mC_{K}^{(i)}$ for all $i \in [l]$, and $\mC_{\widehat K}^{(i)} = \mC_{K}^{(i)}$ if $P$ is an identity matrix.
\end{proposition}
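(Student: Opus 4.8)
The plan is to factor the action of $\widehat K = XKP$ into the row operation $X$ and the column permutation $P$, and to track separately how each affects (a) the kernel codes $\mC_K^{(\phi)}$ together with their partial distances $\mD_\phi$, which determine $E$ through \eqref{f:Rate}, and (b) the family of induced bit-subchannels, which determines $\mu$. Since every partial distance is the distance from a row to the span of the rows below it, the claims about $E$ and about the codes $\mC^{(\phi)}$ reduce to elementary linear algebra over $\F_2$, whereas the claim about $\mu$ must be argued at the level of the polarization process itself.

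First I would handle the row-addition factor, setting $K' = XK$. Here only row $j$ is modified, to $K'_j = K_j + K_i$ with $j < i$, while all other rows are untouched. For $\phi > j$ the generators $K_\phi,\dots,K_{l-1}$ are unchanged, so $\mC_{K'}^{(\phi)} = \mC_K^{(\phi)}$ trivially. For $\phi \le j$ the index $i$ satisfies $i \ge \phi+1$, so $K_i$ is already a codeword of $\mC_K^{(\phi+1)} \subseteq \mC_K^{(\phi)}$; over $\F_2$ one recovers $K_j = (K_j+K_i)+K_i$, so replacing $K_j$ by $K_j+K_i$ leaves the span unchanged and again $\mC_{K'}^{(\phi)} = \mC_K^{(\phi)}$. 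For the partial distances the only index to check is $\phi = j$: since $K_i \in \mC_K^{(j+1)}$, the coset $K'_j + \mC_K^{(j+1)} = (K_j+K_i) + \mC_K^{(j+1)} = K_j + \mC_K^{(j+1)}$ is unchanged, whence $d(K'_j,\mC_K^{(j+1)}) = \mD_j$. Thus $X$ preserves every kernel code and every partial distance, which already settles the $P = I$ case of the last assertion.

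Next I would add the permutation $P$, writing $\widehat K = K'P$. Right multiplication permutes the coordinates of all rows simultaneously, so $\mC_{\widehat K}^{(\phi)} = \mC_{K'}^{(\phi)} P = \mC_K^{(\phi)} P$, which is permutation-equivalent to $\mC_K^{(\phi)}$ for every $\phi$ and equals it when $P = I$. Because a coordinate permutation is a Hamming isometry, $d(\widehat K_\phi, \mC_{\widehat K}^{(\phi+1)}) = d(K'_\phi,\mC_{K'}^{(\phi+1)}) = \mD_\phi$, so the whole PDP is preserved and $E(\widehat K) = E(K)$ follows at once from \eqref{f:Rate}.

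The remaining and most delicate point is $\mu(\widehat K) = \mu(K)$, for which no closed form in the $\mD_\phi$ is available; here I would argue that $\widehat K$ and $K$ induce the same collection of bit-subchannels $W_m^{(i)}$. For the permutation the mixed-product rule gives $(K'P)^{\otimes m} = K'^{\otimes m} P^{\otimes m}$ with $P^{\otimes m}$ a permutation of the $n = l^m$ output coordinates; since the $n$ copies of $W$ are identical, relabeling the channel uses leaves each $W_m^{(i)}$ unchanged. For the row operation, $(XK)^{\otimes m} = X^{\otimes m} K^{\otimes m}$ realizes an invertible change $\tilde u = u X^{\otimes m}$ of the inputs, and the hard part is to verify that, because $X$ only folds a higher-indexed row into a lower-indexed one, this change respects the successive-cancellation ordering and reproduces each synthetic channel exactly rather than merely permuting the input labels. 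Establishing this bit-channel identity is the main obstacle; once it is in place, both $\mu$ and (redundantly) $E$ are invariant, completing the proof.
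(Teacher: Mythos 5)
Your linear-algebra half is correct and complete, and it matches how this statement is actually grounded: the paper itself gives no proof of Proposition~\ref{pAddRow} (it imports it from \cite{fazeli2014scaling}), and the reference's proof is exactly the bit-channel argument you sketch at the end. Since $j<i$, the row $K_i$ lies in $\mC_K^{(\phi+1)}$ for every $\phi\le j$, so left multiplication by $X$ changes neither any kernel code nor the coset $K_j+\mC_K^{(j+1)}$, while right multiplication by $P$ acts as a simultaneous Hamming isometry on all the codes; this settles the equality (for $P=I$) and equivalence of the $\mC^{(\phi)}$, the invariance of the PDP, and, via \eqref{f:Rate}, $E(\widehat K)=E(K)$.

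The genuine gap is the one you flag yourself: you leave unproven that $\tilde u = uX^{\otimes m}$ ``respects the successive-cancellation ordering,'' and without that step the claim $\mu(\widehat K)=\mu(K)$ is unestablished, since no closed form for $\mu$ in terms of the PDP exists. The missing argument is short. Write $X=I+E_{j,i}$, the identity plus a single $1$ in row $j$, column $i$ with $j<i$; then $X$ is upper unitriangular, and because $(A\otimes B)_{(a_1,a_2),(b_1,b_2)}=A_{a_1,b_1}B_{a_2,b_2}$ vanishes unless $a_1\le b_1$ and $a_2\le b_2$, the power $X^{\otimes m}$ is upper unitriangular with respect to the lexicographic order on index tuples --- which is precisely the SC decoding order of $u_0^{l^m-1}$. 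Hence $\tilde u_k=u_k+g_k\bigl(u_0^{k-1}\bigr)$ for every $k$, and the inverse $(X^{\otimes m})^{-1}=(X^{-1})^{\otimes m}$ (indeed $X^{-1}=X$ over $\F_2$, as $E_{j,i}^2=0$) has the same triangular form, so knowing $\tilde u_0^{k-1}$ is equivalent to knowing $u_0^{k-1}$. Conditioned on this past, the map $u_k\mapsto\tilde u_k$ is XOR with a known constant, a bijective relabeling of the input alphabet, so each bit subchannel $W_m^{(k)}$ of $\widehat K$ coincides with that of $K$ up to relabeling; all $Z_m^{(k)}$ agree for every $m$, which yields $\mu(\widehat K)=\mu(K)$ and, redundantly, $E(\widehat K)=E(K)$. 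Your factorization is legitimate at every level because $(XKP)^{\otimes m}=X^{\otimes m}K^{\otimes m}P^{\otimes m}$, and $P^{\otimes m}$ merely permutes $l^m$ i.i.d.\ uses of $W$, as you correctly observed. With this lemma inserted, your proof is complete and is essentially the argument of \cite{fazeli2014scaling}.
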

Proposition \ref{pAddRow} implies that any kernel $K$ with PDP $\mD$ can be transformed by addition of row $i$  to row $j < i$ into a such kernel $\widehat K$ with PDP\ $\widehat \mD$ as $\wt(\widehat K_i)  = \widehat \mD_i = \mD_i$, $i \in [l].$ Therefore, without loss of generality, the sets of candidate rows can be defined as
\begin{equation}
\label{fCandSet} 
\bbM_\phi \overset{(def)}{=} \set{v_0^{l-1}| v_0^{l-1} \in \bF_2^l, \wt(v) = \mD_\phi}.
\end{equation}

\begin{algorithm}[t]
\caption{\texttt{BasicKernelSearch}$(K,\phi,\bM,\mD)$}
\label{alg_basic}
\tcp{Input: 
\\matrix $K$ containing partially constructed kernel $K_{\phi+1}^{l-1}$;\\ current row index $\phi$;\\ set of candidate rows $\bM$;\\ The desired partial distances $\mD$;}
\If{$\phi = -1$}{
        \Return $K$;\\
}
\For{\textbf{each} $v \in \bbM_\phi$}{ \label{basicLoop} 
        $w \gets d(v, \mC_{K}^{(\phi+1)})$;\tcp{Partial distance } \label{basicPDcomp}
        \If{$w = \mD_\phi$}{
                $K_{\phi} \gets v$;\\
                $\widehat K \gets $ \texttt{BasicKernelSearch}$(K, \phi-1, \bM, \mD)$;\\
                \If{$\widehat K \ne \mathbf 0^{l \times l}$}{
                        \Return $\widehat K$ \label{basicLoopReturn};\\
                }
        }
}
\Return{$\mathbf 0^{l \times l}$};\\
\tcp{Output: either kernel $K$ with the\\ required PDs $\mD$ or zero matrix $\mathbf 0^{l \times l}$\hspace{-3mm}}
\end{algorithm}

This approach can be described in terms of the depth-first search over the candidate rows, which is illustrated in Alg. \ref{alg_basic}. The search should begin by calling \textit{BasicKernelSearch}$(\mathbf 0^{l \times l}, l-1, \bM, \mD)$, depicted in Alg. \ref{alg_basic}, where $\mathbf 0^{l \times l}$ is a zero $l\times l$ matrix. At the \textit{phase} $\phi$ the algorithm iterates through 
the vectors $v \in \mathbb M_\phi$ and checks whether $d(v, \mathcal C_{K}^{(\phi+1)}) = \mD_{\phi}$. If such a row is found, the algorithm sets $K_\phi\leftarrow v$, proceeds to the phase $\phi - 1$ and starts a new search over rows $v' \in \bbM_{\phi-1}$. If there are no such rows $v'$ as $d(v', \mathcal C_{K}^{(\phi)}) = \mD_{\phi-1}$, the algorithm returns to the phase $\phi+1$ and continues iterating through the vectors $v \in \bbM_{\phi+1}$. The algorithm works until a kernel $K$ with required PDP $\mD$ is obtained or zero matrix $\mathbf 0^{l \times l}$ is returned.

Observe that the function \textit{BasicKernelSearch} induces a search tree with nodes containing partially constructed kernels $K_{\phi}^{l-1}$, which generates kernel codes $\mC_{K}^{(\phi)}$. A single node with a kernel code $\mC_{K}^{(\phi)}$ is connected with descendant nodes with kernel codes $\mC_{K}^{(\phi-1)}$ and a parent node with a kernel code $\mC_{K}^{(\phi+1)}$. Unfortunately, the size of this search tree is incredibly large, since the number of descendants of a single node is upper bounded by $|\bbM_{\phi-1}| = \binom{l}{\mD_{\phi-1}}$. Moreover, the computation of $d(v, \mC_{K}^{(\phi)})$ has the complexity of $O(2^{\min(l-\phi, \phi)})$, which makes the depth-first search on such a tree incredibly complex.  

In the following sections we propose methods, which allow to significantly reduce the size of the search tree and complexity of partial distances computation.

\section{Enhanced Search Algorithm}
\label{sEnhanced}

\subsection{Coset leader weight table}
\label{ssCLWT}

For a set $A \subseteq \F_2^n$  and a vector $v \in \F_2^n$ by $A + v$ we denote a set $\set{a + v|a \in A}$. 
For an $(n, k)$ code $\mC$ with the parity-check matrix $H$ we define a \textit{coset leader weight table} (CLWT)
\begin{equation}
\label{eqCLWT}
\bfT_{\mC}(s) = \min_{v \in \F_2^n, \ v H^T = s}\wt(v).
\end{equation}
 In other words, $\bfT_{\mC}(s)$ equals to a minimum weight of a coset $\mC + v$, where $v H^T = s$. Naturally, a linear code $\mC$ admits different parity-check matrices; therefore, we assume that a particular parity-check matrix $H$ (used in \eqref{eqCLWT}) of $\mC$ is given together with $\bfT_\mC$. We also assume that the coset representative $v$ of minimum weight is available for a particular $s$.

A CLWT can be calculated by running the Viterbi algorithm on the unexpurgated syndrome trellis of $\mC$ \cite{wolf78efficient}. This trellis consists of states labeled by syndromes, where the state $h$ at the $i$-th layer is connected to the $(i+1)$-th layer states $h$ and $h + (H^{T})_i$ by edges with $0$ and $1$ respectively. The trellis ends with $2^{n-k}$ states and the number of states is given by 
$O(n \cdot 2^{n-k})$.   
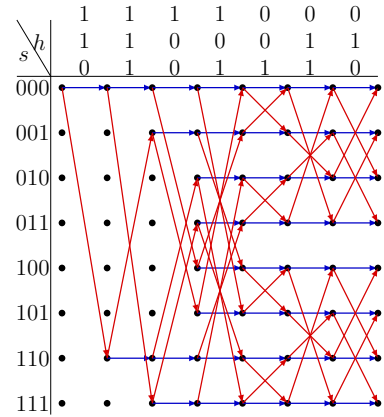
\begin{figure}[ht]
\centering
\scalebox{0.6}{
{
\Large
\usetikzlibrary{arrows}
\begin{tikzpicture}

\definecolor{darkred}{RGB}{215,0,0}
\definecolor{darkgreen}{RGB}{0,125,0}
\definecolor{neongreen}{RGB}{57,255,20}
\definecolor{darkblue}{RGB}{0,0,200}
\definecolor{deeporange}{RGB}{255,102,0}
\definecolor{darkpurple}{RGB}{136, 0, 133}
\definecolor{elpurple}{RGB}{191,0,255}
\definecolor{colblue}{RGB}{185,217,235}

\begin{scope}
\node [circle,fill,inner sep=1.5pt] (v1) at (-3.5,0) {};
\node at (-3.5,-1) [circle,fill,inner sep=1.5pt]{};
\node at (-3.5,-2) [circle,fill,inner sep=1.5pt]{};
\node at (-3.5,-3) [circle,fill,inner sep=1.5pt]{};
\node at (-3.5,-4) [circle,fill,inner sep=1.5pt]{};
\node at (-3.5,-5) [circle,fill,inner sep=1.5pt]{};
\node at (-3.5,-6) [circle,fill,inner sep=1.5pt]{};
\node at (-3.5,-7) [circle,fill,inner sep=1.5pt]{};

\node at (-3,1) {
$
\begin{matrix}
1\\
1\\
0\\
\end{matrix}
$
};
\draw[-latex, thick, color = darkblue] (-3.5,0) -- (-2.5,0);
\end{scope}

\begin{scope}[shift={(1,0)}]
\node [circle,fill,inner sep=1.5pt] (v2) at (-3.5,0) {};
\node [circle,fill,inner sep=1.5pt] at (-3.5,-1) {};
\node [circle,fill,inner sep=1.5pt] at (-3.5,-2) {};
\node [circle,fill,inner sep=1.5pt] at (-3.5,-3) {};
\node [circle,fill,inner sep=1.5pt] at (-3.5,-4) {};
\node [circle,fill,inner sep=1.5pt] at (-3.5,-5) {};
\node [circle,fill,inner sep=1.5pt] at (-3.5,-6) {};
\node [circle,fill,inner sep=1.5pt] at (-3.5,-7) {};

\node at (-3,1) {
$
\begin{matrix}
1\\
1\\
1\\
\end{matrix}
$
};
\draw[-latex, thick, color = darkblue] (-3.5,0) -- (-2.5,0);
\draw[-latex, thick, color = darkblue] (-3.5,-6) -- (-2.5,-6);
\end{scope}

\begin{scope}[shift={(2,0)}]
\node [circle,fill,inner sep=1.5pt] at (-3.5,0) {};
\node [circle,fill,inner sep=1.5pt] at (-3.5,-1) {};
\node [circle,fill,inner sep=1.5pt] at (-3.5,-2) {};
\node [circle,fill,inner sep=1.5pt] at (-3.5,-3) {};
\node [circle,fill,inner sep=1.5pt] at (-3.5,-4) {};
\node [circle,fill,inner sep=1.5pt] at (-3.5,-5) {};
\node [circle,fill,inner sep=1.5pt] at (-3.5,-6) {};
\node [circle,fill,inner sep=1.5pt] at (-3.5,-7) {};

\node at (-3,1) {
$
\begin{matrix}
1\\
0\\
0\\
\end{matrix}
$
};
\draw[-latex, thick, color = darkblue] (-3.5,0) -- (-2.5,0);
\draw[-latex, thick, color = darkblue] (-3.5,-6) -- (-2.5,-6);
\draw[-latex, thick, color = darkblue] (-3.5,-7) -- (-2.5,-7);
\draw[-latex, thick, color = darkblue] (-3.5,-1) -- (-2.5,-1);
\end{scope}

\begin{scope}[shift={(3,0)}]
\node [circle,fill,inner sep=1.5pt] at (-3.5,0) {};
\node [circle,fill,inner sep=1.5pt] at (-3.5,-1) {};
\node [circle,fill,inner sep=1.5pt] at (-3.5,-2) {};
\node [circle,fill,inner sep=1.5pt] at (-3.5,-3) {};
\node [circle,fill,inner sep=1.5pt] at (-3.5,-4) {};
\node [circle,fill,inner sep=1.5pt] at (-3.5,-5) {};
\node [circle,fill,inner sep=1.5pt] at (-3.5,-6) {};
\node [circle,fill,inner sep=1.5pt] at (-3.5,-7) {};

\node at (-3,1) {
$
\begin{matrix}
1\\
0\\
1\\
\end{matrix}
$
};
\draw[-latex, thick, color = darkblue] (-3.5,0) -- (-2.5,0);
\draw[-latex, thick, color = darkblue] (-3.5,-6) -- (-2.5,-6);
\draw[-latex, thick, color = darkblue] (-3.5,-7) -- (-2.5,-7);
\draw[-latex, thick, color = darkblue] (-3.5,-1) -- (-2.5,-1);

\draw[-latex, thick, color = darkblue] (-3.5,-2) -- (-2.5,-2);
\draw[-latex, thick, color = darkblue] (-3.5,-3) -- (-2.5,-3);
\draw[-latex, thick, color = darkblue] (-3.5,-4) -- (-2.5,-4);
\draw[-latex, thick, color = darkblue] (-3.5,-5) -- (-2.5,-5);
\end{scope}

\begin{scope}[shift={(4,0)}]
\node [circle,fill,inner sep=1.5pt] at (-3.5,0) {};
\node [circle,fill,inner sep=1.5pt] at (-3.5,-1) {};
\node [circle,fill,inner sep=1.5pt] at (-3.5,-2) {};
\node [circle,fill,inner sep=1.5pt] at (-3.5,-3) {};
\node [circle,fill,inner sep=1.5pt] at (-3.5,-4) {};
\node [circle,fill,inner sep=1.5pt] at (-3.5,-5) {};
\node [circle,fill,inner sep=1.5pt] at (-3.5,-6) {};
\node [circle,fill,inner sep=1.5pt] at (-3.5,-7) {};

\node at (-3,1) {
$
\begin{matrix}
0\\
0\\
1\\
\end{matrix}
$
};
\draw[-latex, thick, color = darkblue] (-3.5,0) -- (-2.5,0);
\draw[-latex, thick, color = darkblue] (-3.5,-6) -- (-2.5,-6);
\draw[-latex, thick, color = darkblue] (-3.5,-7) -- (-2.5,-7);
\draw[-latex, thick, color = darkblue] (-3.5,-1) -- (-2.5,-1);

\draw[-latex, thick, color = darkblue] (-3.5,-2) -- (-2.5,-2);
\draw[-latex, thick, color = darkblue] (-3.5,-3) -- (-2.5,-3);
\draw[-latex, thick, color = darkblue] (-3.5,-4) -- (-2.5,-4);
\draw[-latex, thick, color = darkblue] (-3.5,-5) -- (-2.5,-5);
\end{scope}

\begin{scope}[shift={(5,0)}]
\node [circle,fill,inner sep=1.5pt] at (-3.5,0) {};
\node [circle,fill,inner sep=1.5pt] at (-3.5,-1) {};
\node [circle,fill,inner sep=1.5pt] at (-3.5,-2) {};
\node [circle,fill,inner sep=1.5pt] at (-3.5,-3) {};
\node [circle,fill,inner sep=1.5pt] at (-3.5,-4) {};
\node [circle,fill,inner sep=1.5pt] at (-3.5,-5) {};
\node [circle,fill,inner sep=1.5pt] at (-3.5,-6) {};
\node [circle,fill,inner sep=1.5pt] at (-3.5,-7) {};

\node at (-3,1) {
$
\begin{matrix}
0\\
1\\
1\\
\end{matrix}
$
};
\draw[-latex, thick, color = darkblue] (-3.5,0) -- (-2.5,0);
\draw[-latex, thick, color = darkblue] (-3.5,-6) -- (-2.5,-6);
\draw[-latex, thick, color = darkblue] (-3.5,-7) -- (-2.5,-7);
\draw[-latex, thick, color = darkblue] (-3.5,-1) -- (-2.5,-1);

\draw[-latex, thick, color = darkblue] (-3.5,-2) -- (-2.5,-2);
\draw[-latex, thick, color = darkblue] (-3.5,-3) -- (-2.5,-3);
\draw[-latex, thick, color = darkblue] (-3.5,-4) -- (-2.5,-4);
\draw[-latex, thick, color = darkblue] (-3.5,-5) -- (-2.5,-5);
\end{scope}

\begin{scope}[shift={(6,0)}]
\node [circle,fill,inner sep=1.5pt] at (-3.5,0) {};
\node [circle,fill,inner sep=1.5pt] at (-3.5,-1) {};
\node [circle,fill,inner sep=1.5pt] at (-3.5,-2) {};
\node [circle,fill,inner sep=1.5pt] at (-3.5,-3) {};
\node [circle,fill,inner sep=1.5pt] at (-3.5,-4) {};
\node [circle,fill,inner sep=1.5pt] at (-3.5,-5) {};
\node [circle,fill,inner sep=1.5pt] at (-3.5,-6) {};
\node [circle,fill,inner sep=1.5pt] at (-3.5,-7) {};

\node at (-3,1) {
$
\begin{matrix}
0\\
1\\
0\\
\end{matrix}
$
};
\draw[-latex, thick, color = darkblue] (-3.5,0) -- (-2.5,0);
\draw[-latex, thick, color = darkblue] (-3.5,-6) -- (-2.5,-6);
\draw[-latex, thick, color = darkblue] (-3.5,-7) -- (-2.5,-7);
\draw[-latex, thick, color = darkblue] (-3.5,-1) -- (-2.5,-1);

\draw[-latex, thick, color = darkblue] (-3.5,-2) -- (-2.5,-2);
\draw[-latex, thick, color = darkblue] (-3.5,-3) -- (-2.5,-3);
\draw[-latex, thick, color = darkblue] (-3.5,-4) -- (-2.5,-4);
\draw[-latex, thick, color = darkblue] (-3.5,-5) -- (-2.5,-5);
\end{scope}

\begin{scope}[shift={(7,0)}]
\node [circle,fill,inner sep=1.5pt] at (-3.5,0) {};
\node [circle,fill,inner sep=1.5pt] at (-3.5,-1) {};
\node [circle,fill,inner sep=1.5pt] at (-3.5,-2) {};
\node [circle,fill,inner sep=1.5pt] at (-3.5,-3) {};
\node [circle,fill,inner sep=1.5pt] at (-3.5,-4) {};
\node [circle,fill,inner sep=1.5pt] at (-3.5,-5) {};
\node [circle,fill,inner sep=1.5pt] at (-3.5,-6) {};
\node [circle,fill,inner sep=1.5pt] at (-3.5,-7) {};

\end{scope}

\node at (-4.15,0) {$000$};
\node at (-4.15,-1) {$001$};
\node at (-4.15,-2) {$010$};
\node at (-4.15,-3) {$011$};
\node at (-4.15,-4) {$100$};
\node at (-4.15,-5) {$101$};
\node at (-4.15,-6) {$110$};
\node at (-4.15,-7) {$111$};

\draw (-3.75,1.5) -- (-3.75,-7.25);
\draw (-4.5,0.25) -- (3.5,0.25);
\draw (-3.75,0.25) -- (-4.5,1.5);
\node at (-4.35,0.7) {$s$};
\node at (-4,1) {$h$};

\draw[-latex, thick, color = darkred] (-3.5,0) -- (-2.5,-6);

\draw[-latex, thick, color = darkred] (-2.5,0) -- (-1.5,-7);
\draw[-latex, thick, color = darkred] (-2.5,-6) -- (-1.5,-1);

\draw[-latex, thick, color = darkred] (-1.5,0) -- (-0.5,-4);
\draw[-latex, thick, color = darkred] (-1.5,-1) -- (-0.5,-5);
\draw[-latex, thick, color = darkred] (-1.5,-7) -- (-0.5,-3);
\draw[-latex, thick, color = darkred] (-1.5,-6) -- (-0.5,-2);

\draw[-latex, thick, color = darkred] (-0.5,0) -- (0.5,-5);
\draw[-latex, thick, color = darkred] (-0.5,-1) -- (0.5,-4);
\draw[-latex, thick, color = darkred] (-0.5,-2) -- (0.5,-7);
\draw[-latex, thick, color = darkred] (-0.5,-3) -- (0.5,-6);
\draw[-latex, thick, color = darkred] (-0.5,-4) -- (0.5,-1);
\draw[-latex, thick, color = darkred] (-0.5,-5) -- (0.5,-0);
\draw[-latex, thick, color = darkred] (-0.5,-6) -- (0.5,-3);
\draw[-latex, thick, color = darkred] (-0.5,-7) -- (0.5,-2);

\draw[-latex, thick, color = darkred] (0.5,0) -- (1.5,-1);
\draw[-latex, thick, color = darkred] (0.5,-1) -- (1.5,-0);
\draw[-latex, thick, color = darkred] (0.5,-2) -- (1.5,-3);
\draw[-latex, thick, color = darkred] (0.5,-3) -- (1.5,-2);
\draw[-latex, thick, color = darkred] (0.5,-4) -- (1.5,-5);
\draw[-latex, thick, color = darkred] (0.5,-5) -- (1.5,-4);
\draw[-latex, thick, color = darkred] (0.5,-6) -- (1.5,-7);
\draw[-latex, thick, color = darkred] (0.5,-7) -- (1.5,-6);

\draw[-latex, thick, color = darkred] (1.5,0) -- (2.5,-3);
\draw[-latex, thick, color = darkred] (1.5,-1) -- (2.5,-2);
\draw[-latex, thick, color = darkred] (1.5,-2) -- (2.5,-1);
\draw[-latex, thick, color = darkred] (1.5,-3) -- (2.5,-0);
\draw[-latex, thick, color = darkred] (1.5,-4) -- (2.5,-7);
\draw[-latex, thick, color = darkred] (1.5,-5) -- (2.5,-6);
\draw[-latex, thick, color = darkred] (1.5,-6) -- (2.5,-5);
\draw[-latex, thick, color = darkred] (1.5,-7) -- (2.5,-4);

\draw[-latex, thick, color = darkred] (2.5,0) -- (3.5,-2);
\draw[-latex, thick, color = darkred] (2.5,-1) -- (3.5,-3);
\draw[-latex, thick, color = darkred] (2.5,-2) -- (3.5,-0);
\draw[-latex, thick, color = darkred] (2.5,-3) -- (3.5,-1);
\draw[-latex, thick, color = darkred] (2.5,-4) -- (3.5,-6);
\draw[-latex, thick, color = darkred] (2.5,-5) -- (3.5,-7);
\draw[-latex, thick, color = darkred] (2.5,-6) -- (3.5,-4);
\draw[-latex, thick, color = darkred] (2.5,-7) -- (3.5,-5);
\end{tikzpicture}
}
}
\caption{Example of unexpurgated syndrome trellis}
\label{fUnTrellis}
\end{figure}
An example of an unexpurgated syndrome trellis for the linear code with the parity check matrix 
$$
H = 
\arraycolsep=2pt\def\arraystretch{0.8}
\begin{pmatrix}
1&1&1&1&0&0&0\\
1&1&0&0&0&1&1\\
0&1&0&1&1&1&0
\end{pmatrix}
$$
is provided in Fig. \ref{fUnTrellis}, where the blue edges are labeled with $0$, and the red edges are labeled with $1$.

Note that for an $(n, k+t)$ supercode $\mC' \supset \mC$ with a parity-check matrix $H'$, the CLWT $\bfT_{\mC'}$ can be recomputed from $\bfT_{\mC}$ with $(2^{n-k} - 2^{n-k-t})$ comparison operations. Namely, for a syndrome $s \in \F_2^{n-k-t}$ there are such vectors $v^{(j)}$, $j \in [2^{t}]$, as 
$$v^{(0)}{H'}^{\top} = v^{(1)}{H'}^{\top} =
 \cdots = v^{(2^{t}-1)}{H'}^{\top} = s$$
and $v^{(i)}H^{\top} \ne v^{(j)}H^{\top}$, if $i \ne j$. That is,
\begin{equation}
\label{fCLWTRecompute}
\bfT_{\mC'}(s) = \min \limits_{j\in[2^t]} \bfT_{\mC}(s^{(j)}),
\end{equation}
where $s^{(j)} = v^{(j)}H^{\top}$. 

As a result, the computation of CLWD $\bfT_\mC$ eliminates the calculation of $d(v, \mC_{K}^{(\phi+1)})$ for each $v$ in line \ref{basicPDcomp} of Alg. \ref{alg_basic}, since 
$$d(v, \mC_{K}^{(\phi+1)}) = \bfT_{\mC_{K}^{(\phi+1)}}(vH^{\top}),$$ where $H$ is a parity-check matrix of $\mC_{K}^{(\phi+1)}$.

\subsection{Chain of nested codes}
\label{ssChain}
In this section we investigate further applications of CLWT for the search of polarization kernels.
\begin{definition}
\label{dNestedCodes}
Given an $(n, k)$ linear code $\mC$ with a generator matrix $G^{(0)}$ we define a chain $\mC(V)$ of nested codes 
$$\mC_0 \subset \mC_1 \subset \cdots \subset \mC_{b+1},b < n-k,$$
such that 
\begin{equation}
\mC_{i+1} = \mC_i \cup (\mC_i+V_i),i\in[b+1],
 V_i \in (\F_2^n \setminus \mC_i),
\label{eqNestedCodes}
\end{equation}
where $\mC_0 = \mC$. A matrix $V$ of size $(b+1) \times n$ is referred to as a \textbf{coset matrix}. A generator matrix $G^{(i+1)}$ of a code $\mC_{i+1}$ is given by
$$
G^{(i+1)} = \left( \begin{array}{c}
  V_i  \\
  \hline
  G^{(i)} 
\end{array}
\right).
$$
We call the value $D_i = d(V_i, \mC_i)$ the $i$-th \textbf{coset distance}.
\end{definition}

For a matrix $V$ we define a set 
$$\mA(V_j^{i}) = \langle V_j^{i-1}\rangle + V_{i}.$$
From the construction \eqref{eqNestedCodes} of nested codes it follows that
\begin{equation}
\label{eqCosetDistance}
d(V_i, \mC_i) = \min_{v \in \mA(V_j^{i})} d(v, \mC_j),
\end{equation}
where $0 \leq j \leq i \leq b$. Note that $|\mA(V_j^{i})| = 2^{i-j}$.
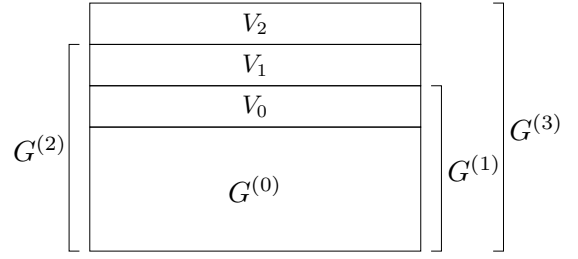
\begin{figure}[ht]
\centering
\scalebox{0.55}{
\begin{tikzpicture}

\draw  (-4.5,1.5) rectangle (3.5,-1.5);
\draw  (-4.5,2.5) rectangle (3.5,1.5);
\draw  (-4.5,3.5) rectangle (3.5,2.5);
\draw  (-4.5,4.5) rectangle (3.5,3.5);

\node[scale = 2] at (-0.5,0) {$G^{(0)}$};
\node[scale = 1.75] at (-0.5,2) {$V_0$};
\node[scale = 1.75] at (-0.5,3) {$V_1$};
\node[scale = 1.75] at (-0.5,4) {$V_2$};
\draw (3.75,2.5) -- (4,2.5) -- (4,-1.5) -- (3.75,-1.5);
\node[scale = 2]  at (4.8,0.5) {$G^{(1)}$};
\node[scale = 2]  at (6.3,1.5) {$G^{(3)}$};
\draw (-4.75,3.5) -- (-5,3.5) -- (-5,-1.5) -- (-4.75,-1.5);
\draw (5.25,4.5) -- (5.5,4.5) -- (5.5,-1.5) -- (5.25,-1.5);
\node[scale = 2] at (-5.7,1) {$G^{(2)}$};
\end{tikzpicture}
}
\caption{An example of a generator matrix of nested codes}
\label{fNestedExample}
\end{figure}
\begin{example}
Consider a chain $\mC(V)$ of nested codes $\mC = \mC_0 \subset \mC_1 
\subset \mC_2 \subset \mC_{3}$ with a nested  generator matrix as illustrated in Fig. \ref{fNestedExample}. By Equation \eqref{eqNestedCodes}, we have $\mC_1 = \mC_0 \cup (\mC_0+V_0)$ and 
$$d(V_1, \mC_1) = \min(d(V_1, \mC_0), d(V_0+V_1, \mC_0)).$$
Next, we consider $\mC_2 = \mC_1 \cup (\mC_1 + V_1) =\mC_0 \cup (\mC_0 + V_0) \cup (\mC_0 + V_1) \cup (\mC_0 + V_0 + V_1),$
which implies \eqref{eqCosetDistance}
\begin{align*}
d(V_2, &\mC_2) = \min(d(V_2, \mC_1), d(V_1+V_2, \mC_1)) =\\
 &\min(d(V_2, \mC_0), d(V_0 + V_2, \mC_0),\\ 
 &d(V_1 + V_2, \mC_0), d(V_0+V_1+V_2, \mC_0)).
\end{align*}
\end{example}
In terms of Definition \ref{dNestedCodes}, an $l \times l$ polarization kernel $K$ can be considered as a chain $\mC(\widetilde K)$ of nested codes with $(n, 0)$ code $\mC$ and coset matrix 
$\widetilde K = 
\arraycolsep=1.15pt\def\arraystretch{0.8}
\begin{pmatrix}
K_{l-1}\\
K_{l-2}\\
\cdots\\
K_{0}
\end{pmatrix}.$ In this case the definition of coset distances is equivalent to the definition of partial distances.

 \begin{definition}
 \label{def_D_valid}
An $(n, k)$ linear code $\mC$ is referred to as $D_0^{b}$-valid, if there exist a chain $\mC(V)$ of nested codes with a coset matrix $V$ and coset distances
$$d(V_i, \mC_i) = D_i, i \in [b+1].$$
\end{definition}
A vector 
$$\mW(\mC) = (|\mW_0(\mC)|, |\mW_1(\mC)|, \cdots, |\mW_n(\mC)|),$$ where  $$\mW_i(\mC) = \set{s|\bfT_{\mC}(s) = i, s \in \F_2^{n-k}}, $$
is referred to as a \textit{weight distribution of coset leaders} (WDCL). 
\begin{proposition}
\label{prop_D_valid}
A code $\mC$ with the WDCL $\mW(\mC)$ is not $D_0^{b}$-valid if
$$
\set{D_i|i\in[b+1]}\cap \set{i|i \in[n+1],|\mW_i(\mC)| = 0} \ne \emptyset.
$$
\end{proposition}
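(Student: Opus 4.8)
The plan is to prove the contrapositive: assuming $\mC$ is $D_0^{b}$-valid, I will show that every coset distance $D_i$ occurs as a coset leader weight of $\mC$, so that $|\mW_{D_i}(\mC)| > 0$ for all $i \in [b+1]$; the set $\set{D_i \mid i \in [b+1]}$ is then disjoint from $\set{j \mid j \in [n+1],\ |\mW_j(\mC)| = 0}$, which is exactly the stated conclusion. First I would invoke Definition \ref{def_D_valid} to fix a chain $\mC(V)$ of nested codes realizing the given coset distances, and let $H$ be the parity-check matrix of $\mC = \mC_0$ associated with $\bfT_\mC$. The central observation is that each coset distance is a distance from some explicit vector to the \emph{base} code $\mC$: applying \eqref{eqCosetDistance} with $j = 0$ yields
\begin{equation*}
D_i = d(V_i, \mC_i) = \min_{v \in \mA(V_0^{i})} d(v, \mC), \qquad i \in [b+1].
\end{equation*}
Since $\mA(V_0^{i}) = \langle V_0^{i-1}\rangle + V_i$ is a finite set (of size $2^{i}$), this minimum is attained at some $v^\star \in \mA(V_0^{i})$, so that $D_i = d(v^\star, \mC)$.

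The next step is to identify $d(v^\star, \mC)$ with a value of the CLWT. By the defining property of $\bfT_\mC$ recalled after \eqref{eqCLWT}, the distance from a vector to $\mC$ equals the minimum weight of its coset, i.e. $d(v^\star, \mC) = \bfT_\mC(s^\star)$ with $s^\star = v^\star H^\top$. Hence $\bfT_\mC(s^\star) = D_i$, which places $s^\star$ in the set $\mW_{D_i}(\mC)$ and forces $|\mW_{D_i}(\mC)| \geq 1$. Letting $i$ range over $[b+1]$ gives $\set{D_i \mid i \in [b+1]} \subseteq \set{j \mid j \in [n+1],\ |\mW_j(\mC)| > 0}$, so its intersection with $\set{j \mid j \in [n+1],\ |\mW_j(\mC)| = 0}$ is empty, which is the contrapositive of the claim.

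I do not expect a genuine obstacle: the argument is essentially a bookkeeping combination of \eqref{eqCosetDistance} with the definition of the coset leader weight table, and no weight distribution needs to be computed. The only points demanding a little care are that \eqref{eqCosetDistance} is stated for $0 \le j \le i \le b$, so it is legitimately available for the choice $j = 0$ and every relevant $i$, and that the coset $\mC_i + V_i$ decomposes as $\bigcup_{v \in \mA(V_0^{i})}(\mC + v)$, which is what makes its minimum weight equal to the displayed minimum over coset leader weights of $\mC$. For $i = 0$ the claim is in fact immediate, since $D_0 = d(V_0, \mC)$ is directly a coset leader weight of $\mC$, with no appeal to \eqref{eqCosetDistance} required.
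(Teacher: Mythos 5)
Your proof is correct and follows essentially the same route as the paper: the paper's own (one-sentence) proof is exactly your observation that \eqref{eqCosetDistance} forces each $D_i$ to be realized as the minimum weight of some coset of $\mC$, i.e.\ as a coset leader weight, so $|\mW_{D_i}(\mC)|>0$ whenever $\mC$ is $D_0^{b}$-valid. You have merely spelled out the details (taking $j=0$, attaining the minimum at $v^{\star}$, and passing through $\bfT_{\mC}(v^{\star}H^{\top})$) that the paper leaves implicit.
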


\begin{proof}
From \eqref{eqCosetDistance} it follows  that to have $D_i = d(V_i, \mC_i)$ for any $V$ there should be at least one weight-$D_i$ coset of $\mC$.
\end{proof}

Proposition \ref{prop_D_valid} is a necessary condition for $\mC$ to be $D_0^{b}$-valid. 

Consider a code $\mC$ with the check matrix $H$ and CLWT $\bfT_{\mC}$. For a \textit{syndrome matrix} $S$ of size $(b+1)\times(n-k)$ we define a function $\bfR(S)$, which maps $S$ to a such $(b+1)\times n$ matrix $V$ as $S_i = V_i H^{\top}$, $i \in [b+1]$. Using this notation, it is convenient to rewrite \eqref{eqCosetDistance} for a particular $V = \bfR(S)$ and a chain $\mC(V)$ as
\begin{equation}
\label{eqCosetDistanceT}
d(V_i, \mC_i) = \min_{v \in \mA(V_j^{i})} d(v, \mC_j) = \min_{s \in \mA(S_j^{i})} \bfT_{\mC_j}(s).
\end{equation}

That is, the $D_0^{b}$-validity of $\mC$ can be checked by computation of $\bfT_{\mC}$ and evaluation of \eqref{eqCosetDistanceT}
for $\mC(\bfR(S))$ generated by different $S$. Obviously, we can restrict $S$ to a set  
$$\mS(\mC, D_0^{b}) = \set{S|S_i \in \mW_{D_i}(\mC), i \in [b+1]}.$$

Observe that in contrast to Proposition \ref{prop_D_valid} the equation \eqref{eqCosetDistanceT} provides an explicit method to obtain the chain of nested codes with coset distances $D_0^{b}$.

The results of this section can be naturally applied to the search of polarization kernels. Namely, given the desired PDP $\mD$, a partially constructed kernel $K_\phi^{l-1}$, which generates the kernel code $\mC = \mC_{K}^{(\phi)}$, and a CLWT $\bfT_{\mC_{K}^{(\phi)}}$, we can try to check the $D_0^{b}$-validity of $\mC$ as described above, with $D_i = \mD_{\phi-1-i}$. The case of $b = 1$ is trivial; however, in practice setting the value $b = 2, 3, 4$ turns out to be more efficient due to the complexity of CLWT recomputation. 

Unfortunately, efficient checking of $D_0^{b-1}$-validity of an $(n, k)$ linear code remains an open problem. We conjecture that this task is at least as hard as checking the existence of $(n, k, d)$ code, where $d$ is a minimum distance. 

\begin{table*}[ht]
\centering
\caption{Accuracy of different invariants}
\label{tAccInv}
\resizebox{0.97\textwidth}{!}{%
\begin{tabular}{|l|c||ccccc||ccccc|}
\hline
\multirow{2}{*}{$(n ,k, d)$} & \multirow{2}{*}{$\left |\widetilde \mL^{\star} \right|$} & \multicolumn{5}{c||}{$\left|\set{\mI(\mC)| \mC \in \widetilde\mL^{\star}}\right|$}                                                                                                          & \multicolumn{5}{c|}{$\widetilde \Delta_{\mI} = \left|\set{\mI(\mC)| \mC \in \widetilde\mL^{\star}}\right|/\left|\widetilde \mL^{\star}\right|$}                                                                                                         \\ \cline{3-12} 
                             &                    & \multicolumn{1}{c|}{$\mI_{A}$}     & \multicolumn{1}{c|}{$\mI_{\mW}$}      & 
    \multicolumn{1}{c|}{$(\mI_{A}, \mI_{\mW})$}      & \multicolumn{1}{c|}{$\mI_S$}       & $(\mI_A, \mI_\mW, \mI_S)$      & \multicolumn{1}{c|}{$\mI_{A}$}      & \multicolumn{1}{c|}{$\mI_{\mW}$}      & \multicolumn{1}{c|}{$(\mI_{A}, \mI_{\mW})$}      & \multicolumn{1}{c|}{$\mI_S$}      & $(\mI_A, \mI_\mW, \mI_S)$      \\ \hline \hline
$(18, 9, 4)$                 & 64513              & \multicolumn{1}{c|}{3258}  & \multicolumn{1}{c|}{1293}   & \multicolumn{1}{c|}{24420}  & \multicolumn{1}{c|}{64046}   & 64313   & \multicolumn{1}{c|}{0.0505} & \multicolumn{1}{c|}{0.0200} & \multicolumn{1}{c|}{0.3785} & \multicolumn{1}{c|}{0.9928} & 0.9969 \\ \hline
$(18, 10, 4)$                & 164709             & \multicolumn{1}{c|}{1688}  & \multicolumn{1}{c|}{176}    & \multicolumn{1}{c|}{11716}  & \multicolumn{1}{c|}{154115}  & 159805  & \multicolumn{1}{c|}{0.0102} & \multicolumn{1}{c|}{0.0011} & \multicolumn{1}{c|}{0.0711} & \multicolumn{1}{c|}{0.9357} & 0.9702 \\ \hline
$(18, 12, 2)$                & 73973              & \multicolumn{1}{c|}{11356} & \multicolumn{1}{c|}{50}     & \multicolumn{1}{c|}{30196}  & \multicolumn{1}{c|}{71937}   & 72422   & \multicolumn{1}{c|}{0.1535} & \multicolumn{1}{c|}{0.0007} & \multicolumn{1}{c|}{0.4082} & \multicolumn{1}{c|}{0.9725} & 0.9790 \\ \hline
$(20, 8, 6)$                 & 119156             & \multicolumn{1}{c|}{10163} & \multicolumn{1}{c|}{41165}  & \multicolumn{1}{c|}{90084}  & \multicolumn{1}{c|}{118521}  & 119109  & \multicolumn{1}{c|}{0.0853} & \multicolumn{1}{c|}{0.3455} & \multicolumn{1}{c|}{0.7560} & \multicolumn{1}{c|}{0.9947} & 0.9996 \\ \hline
$(21, 11, 4)$                & 1456430            & \multicolumn{1}{c|}{48353} & \multicolumn{1}{c|}{13420}  & \multicolumn{1}{c|}{659382} & \multicolumn{1}{c|}{1453071} & 1454715 & \multicolumn{1}{c|}{0.0332} & \multicolumn{1}{c|}{0.0092} & \multicolumn{1}{c|}{0.4527} & \multicolumn{1}{c|}{0.9977} & 0.9988 \\ \hline
$(23, 6, 8)$                 & 624629             & \multicolumn{1}{c|}{4308}  & \multicolumn{1}{c|}{623071} & \multicolumn{1}{c|}{623541} & \multicolumn{1}{c|}{619182}  & 624607  & \multicolumn{1}{c|}{0.0069} & \multicolumn{1}{c|}{0.9975} & \multicolumn{1}{c|}{0.9983} & \multicolumn{1}{c|}{0.9913} & 1      \\ \hline
$(24, 5, 8)$                 & 724530             & \multicolumn{1}{c|}{7274}  & \multicolumn{1}{c|}{723090} & \multicolumn{1}{c|}{723403} & \multicolumn{1}{c|}{713225}  & 724500  & \multicolumn{1}{c|}{0.0100} & \multicolumn{1}{c|}{0.9980} & \multicolumn{1}{c|}{0.9984} & \multicolumn{1}{c|}{0.9844} & 1      \\ \hline
$(24, 6, 8)$                 & 414734             & \multicolumn{1}{c|}{4988}  & \multicolumn{1}{c|}{414134} & \multicolumn{1}{c|}{414238} & \multicolumn{1}{c|}{412336}  & 414721  & \multicolumn{1}{c|}{0.0120} & \multicolumn{1}{c|}{0.9986} & \multicolumn{1}{c|}{0.9988} & \multicolumn{1}{c|}{0.9942} & 1      \\ \hline
$(24, 12, 6)$                & 106913             & \multicolumn{1}{c|}{2436}  & \multicolumn{1}{c|}{812}    & \multicolumn{1}{c|}{30347}  & \multicolumn{1}{c|}{106912}  & 106913  & \multicolumn{1}{c|}{0.0228} & \multicolumn{1}{c|}{0.0076} & \multicolumn{1}{c|}{0.2838} & \multicolumn{1}{c|}{1}      & 1      \\ \hline
$(26, 8, 10)$                & 534236             & \multicolumn{1}{c|}{272}   & \multicolumn{1}{c|}{216963} & \multicolumn{1}{c|}{251583} & \multicolumn{1}{c|}{504244}  & 534139  & \multicolumn{1}{c|}{0.0005} & \multicolumn{1}{c|}{0.4061} & \multicolumn{1}{c|}{0.4709} & \multicolumn{1}{c|}{0.9439} & 0.9998 \\ \hline
$(26, 11, 8)$                & 1531361            & \multicolumn{1}{c|}{5431}  & \multicolumn{1}{c|}{60327}  & \multicolumn{1}{c|}{627137} & \multicolumn{1}{c|}{1527420} & 1531272 & \multicolumn{1}{c|}{0.0035} & \multicolumn{1}{c|}{0.0394} & \multicolumn{1}{c|}{0.4095} & \multicolumn{1}{c|}{0.9974} & 0.9999 \\ \hline
$(26, 18, 4)$                & 433361             & \multicolumn{1}{c|}{17165} & \multicolumn{1}{c|}{76}     & \multicolumn{1}{c|}{74985}  & \multicolumn{1}{c|}{430508}  & 431084  & \multicolumn{1}{c|}{0.0396} & \multicolumn{1}{c|}{0.0002} & \multicolumn{1}{c|}{0.1730} & \multicolumn{1}{c|}{0.9934} & 0.9947 \\ \hline
$(28, 16, 6)$                & 92355              & \multicolumn{1}{c|}{6013}  & \multicolumn{1}{c|}{170}    & \multicolumn{1}{c|}{37558}  & \multicolumn{1}{c|}{92355}   & 92355   & \multicolumn{1}{c|}{0.0651} & \multicolumn{1}{c|}{0.0018} & \multicolumn{1}{c|}{0.4067} & \multicolumn{1}{c|}{1}      & 1      \\ \hline
\end{tabular}%
}
\end{table*}

\subsection{Computing equivalence invariants on codes}
\label{ssInvariants}
\begin{definition}
\label{defEqKernels}
Let $K$ and $\widehat K$ be $l \times l$ polarization kernels. We say that $\widehat K$ is \textbf{equivalent} to $K$, if $\widehat K = XKP$ for matrices $X$ and $P$ defined as in Proposition \ref{pAddRow}.
\end{definition}
Since the polarization properties of the equivalent kernels are the same, it is enough to consider only one kernel from each equivalence class. 

The equivalence of two polarization kernels implies the equivalence of its kernel codes. That is, we propose to prune the equivalent kernel codes $\mC_K^{(\phi)}$ from the search tree, which can be done by solving the code equivalence problem. In our case we need only to check whether two codes are equivalent, the corresponding column permutation is not used.

Unfortunately, to the best of our knowledge, there are no simple solutions of the above task, and below we provide a short review of the known methods. It was shown in \cite{petrank1997code} that the code equivalence problem can be reduced to the graph isomorphism problem in polynomial time. There is a deterministic reduction running in $O(n^\omega)$ if $\text{hull}(\mC) = \mC \cap \mC^{\perp}$ of a code $\mC$ is trivial (i.e. contains only zero codeword), where $\omega$ is a matrix multiplication exponent \cite{bardet2019permutation}. The support splitting algorithm \cite{sendrier2000finding}\ can be used to find a permutation between two equivalent codes. Algorithm \cite{leon1982computing} computes the automorphism group using a set of minimal and next to minimal weight codewords. The code equivalence test based on the canonical augmentation is proposed in \cite{bouyukliev2007about, bouyukliev2021computer}.

That is, the code equivalence problem is hard to decide. 
Hence, to keep the complexity of the polarization kernel search algorithm relatively simple, we consider the following approach.

\begin{definition}[\cite{sendrier2000finding}]
Let $\mL$ denote the set of all codes. An \textbf{invariant} $\mI$ over a set $E$ is defined as a mapping $\mL \rightarrow E$ such that any two permutation equivalent codes take the same value.
\end{definition}
Observe that by definition, in the general case the code invariant provides only sufficient condition of code non-equivalence. 
\begin{definition}
Let $\mL^{\star}$ be a set of all non-equivalent $(n, k, d)$ linear codes. We say that an invariant $\mI$ has \textbf{accuracy} 
$$
\Delta_{\mI} = {|\set{\mI(\mC)| \mC \in \mL^{\star}}|}/{|\mL^{\star}|}.
$$
\end{definition}

Given an $(n, k)$ linear code $\mC$ with the parity-check matrix $H$, we propose to compute the following several invariants. 
Let $A(\mC)$ be the weight distribution of a $\mC$, i.e. 
$$A(\mC) = (a_0, a_1, \dots, a_n),$$ where 
$$a_i = |\set{c|c \in \mC, \wt(c) = i}|.$$
The first invariant we suggest to employ is 
$$
\mI_A(\mC) = 
\begin{cases}
A(\mC), &\text{if } k \leq n/2,\\
A(\mC^{\perp}), &\text{otherwise.}
\end{cases}
$$
Since $A(\mC^{\perp})$ can be obtained from $A(\mC)$ by using the MacWilliams identities, the accuracy of both invariants is the same, but $A(\mC^{\perp})$ is easier to compute for $k > n/2$.

The second invariant we consider is $\mI_{W}(\mC) = \mW(\mC)$. It was proven \cite{britz2005covering} that the weight distribution of coset leaders $\mW(\mC)$ is not determined by $A(\mC)$, therefore, it may distinguish codes with the same weight distribution.

Following the support splitting algorithm \cite{sendrier2000finding}, we introduce the third invariant 
$$
\mI_S(\mC) = 
\begin{cases}
\set{A(s(\mC, i))|i \in [n]}, &\text{if } k \leq n/2,\\
\set{A(s(\mC^{\perp}, i))|i \in [n]}, &\text{otherwise,}
\end{cases}
$$
where $s(\mC, i)$ is a code obtained by shortening of $\mC$ on a coordinate $i$.  The weight distribution $A(s(\mC^{\perp}, i))$ can be obtained from $A(s(\mC, i))$ and $A(\mC)$ (see Lemma 18 in \cite{sendrier2000finding}), hence the invariants $A(s(\mC, i))$ and $A(s(\mC^{\perp}, i))$ have the same accuracy, but the latter is easier to compute for $k > n/2$.

Since we consider short codes, all proposed invariants can be efficiently computed, stored, hashed, and compared. Moreover, given two invariants $\mI_{F}, \mI_{G}$ we can construct a new invariant $$\mI_{F,G}(\mC) = (\mI_{F}(\mC), \mI_{G}(\mC))$$ which may have higher accuracy. To investigate this, we generated sets  $\widetilde\mL^{\star} \subset \mL^{\star}$ of some 
$(n, k, d)$ non-equivalent codes using software package \texttt{QextNewEdition} \cite{bouyukliev2021computer}, and estimated accuracy of the considered invariants for the obtained sets $\widetilde\mL^{\star}$.

Table \ref{tAccInv} presents an estimation 
$$\widetilde \Delta_{\mI}= 
{\left|\set{\mI(\mC)| \mC \in \widetilde\mL^{\star}}\right|}/
{\left|\widetilde \mL^{\star}\right|}$$
of the accuracy $\Delta_{\mI}$ of the proposed invariants. It can be seen that the invariants $\mI_{A}$ and $\mI_{\mW}$ distinguish non-equivalent codes quite weakly on their own, whereas combined in a pair, the accuracy significantly improves. At the same time, the invariant $\mI_S(\mC)$ is good enough, but the addition of $\mI_{A}$ and $\mI_{\mW}$ to $\mI_S(\mC)$ still improves the overall accuracy. 

Strictly speaking, all considered invariants do not provide the complete distinction of inequivalent codes, and consequently, if we use this modification, the algorithm is not guaranteed to find a kernel with given PDP even if it exists. Nevertheless, this approach provides significant search complexity reduction and still allows us to obtain polarization kernels, as it is demonstrated in the next sections.


\subsection{Proposed search algorithm}

\begin{algorithm}[t]
\SetNoFillComment
\caption{\texttt{KernelSearch}$(K, \psi, \mD, \mI)$}
\label{alg_full}
\tcp{Input: 
\\matrix $K$  containing partially constructed kernel $K_{\phi+1}^{l-1}$;\\ current block index $\phi$;\\  the desired partial distances $\mD$;
\\ equivalence invariants $\mI$ ;
}
\If{$\psi = \mV$}{
        \Return $K$;
}
\While{\call{IsNextBlockAvailable}$(Z_\psi)$}{
$\overline V \gets$ \call{GetNextBlock}$(Z_\psi)$;\\
$K_{\phi_\psi}^{\phi_\psi +\ b_\psi-1} \gets \overline V_0^{b_\psi-1}$;\\
$H_\psi \gets$ check matrix of code $\mC_\psi=\langle K_{\phi_\psi}^{l-1}\rangle$;\\
\If{$\psi = 0$}{
Compute $\bfT_{\mC_{\psi}}$ via unexpurgated trellis;
}
\Else{
Recompute $\bfT_{\mC_{\psi}}$ via $\bfT_{\mC_{\psi-1}}$;
}
Compute $\mW(\mC_\psi)$;\\
\If{Proposition \ref{prop_D_valid} for PDs $\mD_0^{\phi_\psi+b_\psi-1}$ is false}{
\textbf{continue};\\
}
Compute the invariant $\mI(\mC_\psi)$;\\
\If{$\mI(\mC_\psi)  \in \bI_\psi$}{
\textbf{continue};
}
\Else{
$\bI_\psi \gets \bI_\psi \cup \set{\mI(\mC_\psi)}$;
}
$Z_{\psi+1} \gets$ \texttt{InitializeBlock}$(\bfT_{\mC_{\psi}}, 
\mD_{\phi_\psi}^{\phi_\psi+b_\psi-1})$;\\
$\widehat K \gets $ \texttt{KernelSearch}$(K, \psi+1, \mD, \mI)$;\\
                \If{$\widehat K \ne \mathbf 0^{l \times l}$}{
                        $\bK \gets \bK \cup \set{\widehat K}$
                }
}
\Return{$\mathbf 0^{l \times l}$}\\
\tcp{Output: either kernel $K$ with the\\ required PDP $\mD$ or zero matrix $\mathbf 0^{l \times l}$}
\label{alg_fullAlg}
\end{algorithm}
In this section we assemble the methods described in Sections \ref{ssCLWT}, \ref{ssChain} and \ref{ssInvariants} into a unified algorithm.   

Suppose we are in the search tree node with an $(l, l-\phi)$ kernel code $\mC_K^{\phi}$ with a parity-check matrix $H_0$. In contrast to the basic algorithm, where descendants of dimension $l-\phi+1$ only can be obtained, now we can try to construct the descendants with $(l, l-\phi+b)$ kernel codes $\mC_K^{\phi-1-b}$, $1 \leq b \leq \phi$.  

Recall that our goal is to obtain an $l \times l$ kernel with the given PDP $\mD$. We propose to partition the PDP $\mD$ into blocks enumerated from the last PDP entry $\mD_{l-1}$. Each block is indexed by the variable $\psi$, has size $b_\psi$, starts at phase $\phi_\psi$ and ends at phase $\phi_\psi+b_\psi-1$. Let $\mV$ be a number of such blocks, then $\phi_0 +\ b_0 = l$ and $\phi_{\mV-1} = 0$. Such enumeration of PDP blocks is convenient for description, i.e. the search starts from zero block and finishes at block $\mV-1$.

Algorithm \ref{alg_full} presents the proposed search algorithm for large polarization kernels.  One should start the search by calling \textit{KernelSearch}$(\mathbf 0^{l \times l}, 0, \mD, \mI)$. We assume that the variables $\bK$ and $\bI$ are global. We also modified the basic algorithm to collect several polarization kernels  with PDP $\mD$ into the set $\bK$, which is used in Section \ref{sKernComplexity} for kernel processing complexity minimization. The set $\bI_\psi$ is a set of invariants $\mI(\mC)$, where $\mI$ can be any combination of $\mI_A, \mI_\mW,$ and $\mI_S$. In our implementation we considered the following options: $\mI_A$, $(\mI_A, \mI_\mW)$, and $(\mI_A, \mI_\mW, \mI_S)$. For efficient comparison of $\mI(C)$ universal hashing \cite{thorup2020high} is used. The block partition of $\mD$ is an input parameter of the algorithm. We recommend setting $1 \leq b_\psi \leq 4$. However, it can be adjusted for a particular PDP.

The algorithm \textit{InitializeBlock($\bfT_{\mC_{\psi}}, 
\mD_{\phi_\psi}^{\phi_\psi+b_\psi-1}$)} stores a CLWT $\bfT_{\mC_{\psi}}$ in a state variable $Z_{\psi+1}$. Partial distances $\mD_{\phi_\psi}^{\phi_\psi+b_\psi-1}$ correspond to coset distances $D_0^{b_\psi-1}$ in the following way: 
$D_{i} = \mD_{\phi_\psi+b_\psi-i-1}$, $i \in [b_\psi]$. Then, the algorithm \textit{IsNextBlockAvailable}$(Z_\psi)$ uses the stored CLWT and tries to construct the chain \eqref{eqNestedCodes} of nested codes by examining all distinct $b_\psi$-tuples $(s_0, s_1, \dots, s_{b_\psi-1})$, such that $T_{H_\psi}(s_i) = D_i$ (see Section \ref{ssChain}). For each tuple equation \eqref{eqCosetDistanceT} is used to compute the coset distance. Recall that we assume that the corresponding coset leaders $v_i$, where $s_i = v_i H_\psi^T$, are available with $\bfT_{\mC_{\psi}}$. Algorithm \textit{GetNextBlock}$(Z_\psi)$ returns the corresponding tuple $(v_0, v_1, \dots, v_{b_\psi-1})$.

\section{Maximizing the error exponent}
\label{sMaximization}

In this section we describe an application of the proposed polarization kernel search method for improvement of the lower bound $\widehat E_l$ on the maximal error exponent $E_l^{\star}$ (see Problem \ref{fExpProbl}). 
\subsection{Search for good partial distance profiles}
\label{ssBounds}

The lower and upper bounds on $E_l^{\star}$ for some $l$ were derived in \cite{korada2010polar}, \cite{presman2015binary} and \cite{lin2015linear} together with explicit kernels achieving lower bounds. For completeness, we provide some of these results in this section. 

According to \cite[Corollary 16]{korada2010polar}, achieving $E_l^{\star}$ can be done by considering only polarization kernels with nondecreasing partial distances, i.e. $\mD_\phi \leq \mD_{\phi+1}$ for $\phi \in [l-1]$.

\begin{proposition}[Minimum Distance and Partial Distance \cite{korada2010polar}]
\label{pMinPartD}
Let $\mC_0$ be a $(n,k, d_0)$ binary linear code. Let $g$ be a length-$n$ vector with $d_H(g,\mC_0) = d_2$. Let $\mC_1$ be the $(n,k+1,d_1)$ linear code obtained by adding the vector $g$ to $\mC_0$, i.e., $\mC_1 = \langle g, \mC_0 \rangle$. Then $d_1 = \min(d_0,d_2)$.
\end{proposition}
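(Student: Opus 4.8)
The plan is to exploit the coset structure of $\mC_1$ together with the standard fact that for a binary linear code the minimum distance equals the minimum weight of a nonzero codeword, which holds because $d_H(x,y) = \wt(x+y)$ over $\F_2$. So it suffices to find the smallest weight among the nonzero words of $\mC_1$.

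First I would observe that, as a set, $\mC_1 = \mC_0 \cup (g + \mC_0)$, and that this union is disjoint: since $d_H(g,\mC_0) = d_2 \geq 1$ we have $g \notin \mC_0$, so no element of $g + \mC_0$ lies in $\mC_0$. (This is also forced by the hypothesis that $\mC_1$ has dimension $k+1$ rather than $k$.) Consequently every nonzero codeword of $\mC_1$ belongs to exactly one of two families: the nonzero elements of $\mC_0$, or the elements $g + c$ with $c \in \mC_0$.

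Next I would minimize the weight over each family separately. Over the first family the minimum weight is $d_0$, by definition of the minimum distance of $\mC_0$. Over the second family I use $\wt(g+c) = d_H(g,c)$ for every $c \in \mC_0$, so that
\[
\min_{c \in \mC_0} \wt(g+c) = \min_{c \in \mC_0} d_H(g,c) = d_H(g,\mC_0) = d_2,
\]
with the value $d_2$ attained by a codeword of $\mC_0$ nearest to $g$. Because $g \notin \mC_0$, all words $g+c$ are nonzero, so this is a legitimate minimum over nonzero codewords and needs no separate treatment of the zero word.

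Finally, combining the two cases gives $d_1 = \min(d_0, d_2)$, as claimed. I do not expect a genuine obstacle here; the proof is essentially a two-coset case split. The only point that warrants care is verifying that the coset $g + \mC_0$ contributes only nonzero codewords — equivalently, that $g \notin \mC_0$ — since that is precisely what allows the coset minimum $d_2$ to enter the overall minimum-weight computation cleanly and guarantees that the added row genuinely increases the dimension.
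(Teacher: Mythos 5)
Your proof is correct: the disjoint two-coset decomposition $\mC_1 = \mC_0 \cup (g + \mC_0)$, with the minimum weight taken separately over each part, is exactly the standard argument, and it is the same one given for this result in the cited source \cite{korada2010polar} (the paper itself states the proposition without proof). Your care in noting that the dimension hypothesis forces $g \notin \mC_0$, so every word of $g+\mC_0$ is nonzero, is precisely the one point that needs checking, and you handle it correctly.
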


Consider an $l \times l$ kernel $K$ with PDP $\mD$. Proposition \ref{pMinPartD} implies that the minimum distance $d_K^{(\phi)}$ of the kernel codes $\mathcal C_K^{(\phi)}$, $\phi \in [l]$ is given by $\min_{\phi \leq i < l} \mD_i$. Let $d[n,k]$ be a best known minimum distance of an $(n, k)$ linear code \cite{Grassl:codetables}. Thus, we can apply the bound $d[l,l-\phi]$ to the kernel codes $\mathcal C_K^{(\phi)}$, which results in the restriction on the partial distance $\mD_\phi \leq d[l,l-\phi]$. 

\begin{lemma}(\cite[Lemma 4]{lin2015linear})
\label{lLin4}
Let $\mD_0, \mD_1, \dots, \mD_{l-1}$ be a non-decreasing partial distance profile of a kernel of size $l$. If $\mD_1 = 2$, then $\mD_i$ is even for all $i \geq 1$.
\end{lemma}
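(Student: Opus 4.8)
The plan is to show that the hypothesis forces the kernel code $\mC_K^{(1)}=\langle K_1^{l-1}\rangle$ to be exactly the even-weight code, after which evenness of every $\mD_i$ with $i\geq 1$ is immediate.

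First I would pin down the minimum distance of $\mC_K^{(1)}$. By the discussion following Proposition \ref{pMinPartD}, the minimum distance of the kernel code $\mC_K^{(\phi)}$ equals $\min_{\phi\leq i<l}\mD_i$; taking $\phi=1$ and using that the profile is non-decreasing gives minimum distance exactly $\mD_1=2$. Since the kernel $K$ is invertible, its rows are linearly independent, so $\mC_K^{(1)}$ is an $(l,l-1,2)$ binary linear code. Next I would identify every $(l,l-1,2)$ code as the even-weight code: such a code has a one-dimensional dual spanned by a single parity check $h\in\F_2^l$, and if some coordinate satisfied $h_j=0$ then the weight-one vector $e_j$ would be a codeword, contradicting minimum distance $2$. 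Hence $h=(1,\dots,1)$ and $\mC_K^{(1)}=\set{c\in\F_2^l:\wt(c)\text{ is even}}$, which is closed under addition.

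Finally, fix $i\geq 1$. Both $K_i$ and every generator $K_{i+1},\dots,K_{l-1}$ of $\langle K_{i+1}^{l-1}\rangle$ are rows of index at least $1$, so they all lie in $\mC_K^{(1)}$; consequently the whole coset $K_i+\langle K_{i+1}^{l-1}\rangle$ is contained in the even-weight code. Therefore each vector $K_i+c$ with $c\in\langle K_{i+1}^{l-1}\rangle$ has even weight, and $\mD_i=d(K_i,\langle K_{i+1}^{l-1}\rangle)$, being a minimum of even numbers, is even. For $i=l-1$ the code $\langle K_l^{l-1}\rangle$ is the zero code and $\mD_{l-1}=\wt(K_{l-1})$ is even by the same reasoning.

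I expect the only real content to lie in the structural identification in the second step, which depends essentially on the non-decreasing hypothesis: it is what upgrades the generic bound ``minimum distance $\leq 2$'' to the exact value $2$ and thereby forces $h$ to be the all-ones vector. Once $\mC_K^{(1)}$ is known to be the even-weight code, its closure under addition makes the remaining argument routine; note also that no claim is made about $\mD_0$, consistent with the fact that $K_0$ need not belong to $\mC_K^{(1)}$.
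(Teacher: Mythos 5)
Your proof is correct: the non-decreasing hypothesis together with $\mD_1=2$ and Proposition \ref{pMinPartD} pins $\mC_K^{(1)}$ down as an $(l,l-1,2)$ code, the unique such code is the even-weight code, and since every row $K_i$ with $i\geq 1$ and every codeword of $\langle K_{i+1}^{l-1}\rangle$ lie in it, each coset $K_i+\langle K_{i+1}^{l-1}\rangle$ consists of even-weight vectors, so each $\mD_i$, $i\geq 1$, is even. The paper imports this lemma from \cite{lin2015linear} without reproducing a proof, and your argument is essentially the one given there, so there is nothing further to compare.
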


\begin{lemma}(\cite[Lemma 5]{lin2015linear})
\label{lLin5}
Let $\mD_0, \mD_1, \dots, \mD_{l-1}$ be a nondecreasing partial distance profile of a kernel $K$ of size $l$. Then for $1 \leq i \leq l$, we have
$
\sum_{i' = i}^{l}2^{l-i'}\mD_{i'-1} \leq 2^{l-i} l.
$
\end{lemma}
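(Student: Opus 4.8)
The plan is to track, for each code in the nested chain $\mC_K^{(l)}\subset\cdots\subset\mC_K^{(0)}$, the \emph{total} Hamming weight of all its codewords, and to compare this quantity as rows are peeled off. For $\phi\in[l+1]$ set
\[
S_\phi=\sum_{c\in\mC_K^{(\phi)}}\wt(c),
\]
where $\mC_K^{(l)}=\{\mathbf 0\}$ so that $S_l=0$. The code $\mC_K^{(\phi)}$ has length $l$ and dimension $l-\phi$, since $K$ is invertible and hence the generators $K_\phi,\dots,K_{l-1}$ are independent. After the substitution $\phi=i'-1$ the claimed inequality reads $\sum_{\phi=i-1}^{l-1}2^{\,l-\phi-1}\mD_\phi\le l\,2^{\,l-i}$, so it suffices to squeeze this weighted sum of partial distances between an upper and a lower bound on the single quantity $S_{i-1}$.

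First I would prove the \emph{upper} bound by double counting over coordinates: writing $S_{i-1}=\sum_{j=0}^{l-1}\bigl|\{c\in\mC_K^{(i-1)}:c_j=1\}\bigr|$ and using the standard fact that, for a linear code, the value at a fixed coordinate is a linear functional and therefore vanishes on either all or exactly half of the codewords, each inner count is at most $2^{(l-i+1)-1}=2^{\,l-i}$; summing over the $l$ coordinates gives $S_{i-1}\le l\,2^{\,l-i}$. Next I would prove a \emph{recursive lower} bound. Since $\mC_K^{(\phi)}=\mC_K^{(\phi+1)}\cup\bigl(K_\phi+\mC_K^{(\phi+1)}\bigr)$ is a disjoint union, $S_\phi=S_{\phi+1}+\sum_{c\in\mC_K^{(\phi+1)}}\wt(K_\phi+c)$. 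The coset $K_\phi+\mC_K^{(\phi+1)}$ has $2^{\,l-\phi-1}$ elements, and by the definition \eqref{fPDDef} its minimum weight is exactly $d(K_\phi,\mC_K^{(\phi+1)})=\mD_\phi$, so each summand is at least $\mD_\phi$ and therefore $S_\phi\ge S_{\phi+1}+2^{\,l-\phi-1}\mD_\phi$.

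Finally I would unroll this recursion from $\phi=i-1$ down to $\phi=l-1$, terminating at $S_l=0$, to obtain $S_{i-1}\ge\sum_{\phi=i-1}^{l-1}2^{\,l-\phi-1}\mD_\phi$; chaining this with the upper bound and re-indexing $i'=\phi+1$ gives the lemma. I expect no serious obstacle: the real content is choosing the potential function $S_\phi$, after which everything is bookkeeping. The one place demanding care is the upper bound, where one must note that an active coordinate contributes $2^{\,l-i}$ rather than $2^{\,l-i+1}$ because it splits the codewords evenly. It is worth remarking that the non-decreasing hypothesis is never used; the inequality in fact holds for the partial distance profile of an arbitrary kernel, and the monotonicity only fixes the normalizing context inherited from \cite{korada2010polar}.
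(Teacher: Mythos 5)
Your proof is correct. Note that this paper does not prove the lemma at all---it imports it from \cite[Lemma 5]{lin2015linear}---and your argument is essentially the one in that reference: bound the total weight $S_{i-1}$ of the $(l,\,l-i+1)$ code $\mC_K^{(i-1)}$ above by $l\,2^{l-i}$ (each coordinate is a linear functional, hence nonzero on at most half of the $2^{l-i+1}$ codewords) and below by partitioning the code into the cosets $K_\phi+\mC_K^{(\phi+1)}$, each of whose $2^{l-\phi-1}$ elements has weight at least $\mD_\phi$ by the definition of partial distance. Your closing observation is also accurate: the nondecreasing hypothesis is never invoked, so the inequality in fact holds for the partial distance profile of an arbitrary kernel.
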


\begin{table}[t]
\caption{Comparison of various kernels with high error exponent}
\label{tableMaxCmp}
\centering
\scalebox{0.85}{
\begin{tabular}{|c|cccc|}
\hline
\multirow{2}{*}{$l$} & \multicolumn{4}{c|}{Error exponent} \\ \cline{2-5} & 
\multicolumn{1}{c||}{\begin{tabular}[c]{@{}c@{}}Proposed\\ method\end{tabular}} & 
\multicolumn{1}{c||}{\begin{tabular}[c]{@{}c@{}}LP upper \\bound \cite{presman2015binary}\end{tabular}} &
\multicolumn{1}{c|}{\begin{tabular}[c]{@{}c@{}}Shortened \\BCH kernels \\\cite{korada2010polar}, \cite{trofimiuk2021shortened}\end{tabular}} & 
\multicolumn{1}{c|}{\begin{tabular}[c]{@{}c@{}}Code  \\decomposition \cite{presman2015binary}\end{tabular}} \\ \hline
17 & \multicolumn{1}{c||}{\textbf{0.49361}} & \multicolumn{1}{c||}{ 0.50447} & \multicolumn{1}{c|}{0.49175} &  \\ \hline
18 & \multicolumn{1}{c||}{\textbf{0.50052}} & \multicolumn{1}{c||}{0.50925} & \multicolumn{1}{c|}{0.48968} & 0.49521 \\ \hline
19 & \multicolumn{1}{c||}{\textbf{0.50054}} & \multicolumn{1}{c||}{0.51475} & \multicolumn{1}{c|}{0.49351} & 0.49045 \\ \hline
20 & \multicolumn{1}{c||}{\textbf{0.5062}} & \multicolumn{1}{c||}{0.52190} & \multicolumn{1}{c|}{0.49659} &  \\ \hline
21 & \multicolumn{1}{c||}{\textbf{0.50868}} & \multicolumn{1}{c||}{ 0.52554} & \multicolumn{1}{c|}{0.49339} & 0.49604 \\ \hline
22 & \multicolumn{1}{c||}{\textbf{0.51181}} & \multicolumn{1}{c||}{ 0.52317} & \multicolumn{1}{c|}{0.49446} & 0.50118 \\ \hline
23 & \multicolumn{1}{c||}{\textbf{0.51612}} & \multicolumn{1}{c||}{0.52739} & \multicolumn{1}{c|}{0.50071} & 0.50705 \\ \hline
24 & \multicolumn{1}{c||}{\textbf{0.51647}} & \multicolumn{1}{c||}{0.53362} & \multicolumn{1}{c|}{0.50977} & 0.51577 \\ \hline
25 & \multicolumn{1}{c||}{\textbf{0.51683}} & \multicolumn{1}{c||}{0.53633} & \multicolumn{1}{c|}{0.50544} & 0.50608 \\ \hline
26 & \multicolumn{1}{c||}{\textbf{0.52078}} & \multicolumn{1}{c||}{} & \multicolumn{1}{c|}{0.50948} &  \\ \hline
27 & \multicolumn{1}{c||}{\textbf{0.52163}} & \multicolumn{1}{c||}{} & \multicolumn{1}{c|}{0.50969} &  \\ \hline
28 & \multicolumn{1}{c||}{\textbf{0.52197}} & \multicolumn{1}{c||}{} & \multicolumn{1}{c|}{0.51457} &  \\ \hline
29 & \multicolumn{1}{c||}{\textbf{0.52109}} & \multicolumn{1}{c||}{} & \multicolumn{1}{c|}{0.5171} &  \\ \hline
\end{tabular}
}
\end{table}

\begin{table*}[ht]
\caption{Polarization parameters of the proposed kernels with the best error exponent}
\label{tableMax}
\centering
\scalebox{0.85}{
\begin{tabular}{|c|c|c|l|cc|}
\hline
\multirow{2}{*}{$l$} & \multirow{2}{*}{$E$} & \multirow{2}{*}{$\mu$} & \multicolumn{1}{c|}{\multirow{2}{*}{Partial distance profile}} & \multicolumn{2}{c|}{Processing complexity} \\ \cline{5-6} 
 &  &  & \multicolumn{1}{c|}{} & \multicolumn{1}{c|}{RTPA} & Viterbi \\ \hline
17 & 0.493607 & 3.461 & 1, 1, 2, 2, 2, 3, 4, 4, 4, 5, 6, 7, 8, 8, 8, 8, 16 & \multicolumn{1}{c|}{1073} & 8911 \\ \hline
18 & 0.50052 & 3.474 & 1, 2, 2, 2, 2, 2, 4, 4, 4, 6, 6, 6, 6, 8, 8, 10, 10, 12 & \multicolumn{1}{c|}{2956} & 15324 \\ \hline
19 & 0.50054 & 3.417 & 1, 2, 2, 2, 2, 2, 4, 4, 4, 4, 6, 6, 6, 8, 8, 8, 10, 10, 16 & \multicolumn{1}{c|}{4020} & 20097 \\ \hline
20 & 0.5062 & 3.414 & 1, 2, 2, 2, 2, 2, 4, 4, 4, 4, 6, 6, 8, 8, 8, 8, 8, 8, 12, 16 & \multicolumn{1}{c|}{3546} & 35843 \\ \hline
21 & 0.50868 & 3.347 & 1, 2, 2, 2, 2, 2, 4, 4, 4, 4, 6, 6, 6, 6, 8, 8, 10, 10, 10, 14, 14 & \multicolumn{1}{c|}{6500} & 39566 \\ \hline
22 & 0.51181 & 3.331 & 1, 2, 2, 2, 2, 2, 4, 4, 4, 4, 6, 6, 6, 6, 8, 8, 8, 10, 10, 10, 12, 20 & \multicolumn{1}{c|}{9668} & 47875 \\ \hline
23 & 0.51612 & 3.318 & 1, 2, 2, 2, 2, 2, 4, 4, 4, 4, 6, 6, 6, 6, 8, 8, 8, 10, 10, 10, 12, 14, 16 & \multicolumn{1}{c|}{13996} & 92963 \\ \hline
24 & 0.51647 & 3.287 & 1, 2, 2, 2, 2, 2, 4, 4, 4, 4, 4, 6, 6, 6, 8, 8, 8, 8, 10, 12, 12, 12, 16, 16 & \multicolumn{1}{c|}{18262} & 103885 \\ \hline
25 & 0.51683 & 3.281 & 1, 2, 2, 2, 2, 2, 4, 4, 4, 4, 4, 6, 6, 6, 8, 8, 8, 8, 8, 10, 12, 12, 12, 16, 18 & \multicolumn{1}{c|}{22690} & 120458 \\ \hline
26 & 0.52078 & 3.248 & 1, 2, 2, 2, 2, 2, 4, 4, 4, 4, 4, 6, 6, 6, 6, 8, 8, 8, 10, 10, 12, 12, 12, 12, 16, 20 & \multicolumn{1}{c|}{37820} & 222309 \\ \hline
27 & 0.52163 & 3.3 & 1, 2, 2, 2, 2, 2, 4, 4, 4, 4, 4, 6, 6, 6, 6, 8, 8, 8, 10, 10, 10, 12, 12, 12, 12, 16, 20 & \multicolumn{1}{c|}{48756} & 242714 \\ \hline
28 & 0.52197 & 3.227 & 1, 2, 2, 2, 2, 2, 4, 4, 4, 4, 4, 6, 6, 6, 6, 6, 8, 8, 8, 10, 10, 10, 12, 12, 14, 14, 16, 24 & \multicolumn{1}{c|}{63826} & 316029 \\ \hline
29 & 0.52109 & 3.23 & 1, 2, 2, 2, 2, 2, 4, 4, 4, 4, 4, 4, 6, 6, 6, 6, 8, 8, 8, 10, 10, 10, 12, 12, 14, 14, 16, 16, 20 & \multicolumn{1}{c|}{91552} & 449328 \\ \hline
\end{tabular}
}
\end{table*}

We also describe the linear programming (LP) bounds on partial distances \cite{presman2015binary}.  For fixed positive integers $n$ and $\lambda$, the \textit{Krawtchouk polynomial} $P_{k}(x)$ of degree $k$ is defined by
\begin{equation}
P_{k}(x)=\sum_{j=0}^{k}(-1)^{j} \lambda^{k-j}\left(\begin{array}{l}
x \\
j
\end{array}\right)
\left(
\begin{array}{c}
n-x \\
k-j
\end{array}\right), \quad 0 \leq k \leq n
\end{equation}
For a linear $(n, k)$ code $\mC$ over $\mathbb F^{q}, q = \lambda + 1,$ with weight distribution $A$ we have \cite{delsarte1972bounds}
\begin{equation}
\sum_{i = 0}^{n} A_{i} P_{k}(i) \geqslant -\lambda^k {n \choose k}, \quad k \in n
\end{equation}
For a binary $l \times l$ kernel $K$ with PDP $\mD$ the following expressions hold \cite{presman2015binary}
\begin{align}
\label{fLPKernel1} 
&A_{K}^{(\phi)}[i] \geq 0, &0 \leq i, \phi \leq l,\\
&\sum_{i = 0}^{l} A_{K}^{(\phi)}[i] = 2^{(l-\phi)}, &0 \leq \phi \leq l,\\
&A_{K}^{(\phi-1)}[i] - A_{K}^{(\phi)}[i] = 0, &0 \leq i < \mD_{\phi}, 1 \leq \phi \leq l,\\
&A_{K}^{(\phi-1)}[i] - A_{K}^{(\phi)}[i] \geq 0, &\mD_{\phi} \leq i \leq n, 1 \leq \phi \leq l,\\
 \label{fLPKernel2}
&\sum_{i = 0}^{l} A_k^{(\phi)}[i] P_{k}(i) \geq - {l \choose k}, &k \in l
\end{align}

An arbitrary PDP $\mD$ can be tested for \eqref{fLPKernel1}--\eqref{fLPKernel2} via the integer or constraint programming solver. In this work we used Google OR-Tools for C++ \cite{ortools}. 

That is, to improve the lower bound $\widehat E_l$ on $E^{\star}_l$, we consider the approach similar to the one described in \cite{lin2015linear}. Namely,

\begin{enumerate}
\item Generate a set $\widetilde{\mathbf D}_l$ of LP-valid nondecreasing PDPs $\mD$ with $\mD_{\phi}  \leq d[l,l-\phi]$ satisfying Lemmas \ref{lLin4} and \ref{lLin5}, and with $\frac{1}{l} \sum_{i=0}^{l-1} \log_l \mD_i$ larger than the existing lower bounds on $E_l$;
\item  Run Algorithm \ref{alg_full} for all $\mD \in \widetilde{\mathbf D}_l$.
\end{enumerate}

Due to the tremendous search space, for some PDP $\mD$ the algorithm may not produce a kernel for a long time. In this case, it should be terminated. Nevertheless, during the computer search we observed that if a kernel with given PDP exists, Algorithm \ref{alg_full} often returns it quickly (in a few seconds). 

\subsection{Results}

Table \ref{tableMaxCmp} presents the error exponent of the kernels obtained by the proposed approach. The obtained kernels can be found in Appendix. It can be seen that for all considered sizes, the obtained kernels have better error exponent in comparison with the shortened BCH kernels, introduced in \cite{korada2010polar} and refined in \cite{trofimiuk2021shortened}, and the kernels, obtained from the code decomposition \cite{presman2015binary}. We also provide the upper bounds on $E^{\star}_l$ obtained by linear programming \cite{presman2015binary}. Observe that the gap between the error exponent of the proposed kernels and the upper bound of the LP is no greater than $0.02$, which indicates the efficiency of the proposed approach.

Table \ref{tableMax} reports the scaling exponent, PDP and processing complexity (measured as a number of addition and comparison operations) of the obtained kernels for Viterbi-based processing algorithm \cite{griesser2002aposteriori} and the recently proposed recursive trellis processing algorithm (RTPA) \cite{trifonov2021recursive}. It can be seen that the complexity of RTPA is significantly lower compared with Viterbi processing. For each kernel we also tried several million random column permutations to reduce its RTPA processing complexity. Nevertheless, we conjecture that the processing complexity of the obtained kernels can be further decreased by finding the appropriate column permutation.  

We did not try to find kernels of the considered sizes with the lowest scaling exponent. We suppose that it can be further decreased.

\section{Kernels with low processing complexity}
\label{sKernComplexity}
\begin{table*}[ht]
\caption{Polarization parameters of kernels which admit low complexity processing}
\label{tableCompl}
\centering
\scalebox{0.85}{
\begin{tabular}{
|c
|c
|c
|l|c|c|}
\hline
\multirow{2}{*}{$l$} & \multirow{2}{*}{$E$} & \multirow{2}{*}{$\mu$} & \multicolumn{1}{c|}{\multirow{2}{*}{Partial distance profile}}                                                     & \multicolumn{2}{c|}{Processing complexity} \\ \cline{5-6} 
                     &                      &                        & \multicolumn{1}{c|}{}                                                                                       & RTPA         & Viterbi        \\ \hline
18                   & 0.49521              & 3.648                & 1, 2, 2, 4, 2, 2, 2,   4, 4, 6, 4, 6, 8, 8, 8, 8, 8, 16                                                     & 1000                      & 6506           \\ \hline
20                   & 0.49943              & 3.649                & 1, 2, 2, 4, 2, 4, 2,   2, 4, 4, 6, 8, 8, 8, 4, 8, 12, 8, 8, 16                                              & 478                       & 5622           \\ \hline
24                   & 0.50291              & 3.619                & 1, 2, 2, 4, 2, 4, 2,   4, 6, 2, 4, 4, 8, 8, 12, 4, 4, 8, 8, 12, 12, 8, 16, 16                               & 365                       & 6319           \\ \hline
24                   & 0.51577              & 3.311                & 1, 2, 2, 2, 2, 2, 4, 4, 4, 4, 4, 4, 8, 8, 8, 8, 8, 8, 8, 12, 12, 12, 16, 16                               & 2835                       & 29782           \\ \hline
27                   & 0.49720              & 3.766                & 1, 2, 2, 4, 2, 2, 4,   4, 6, 2, 4, 4, 6, 6, 8, 8, 10, 12, 4, 4, 8, 8, 12, 12, 8, 16, 16                     & 998                       & 14907          \\ \hline
32                   & 0.52194              & 3.421                
& 1,\! 2,\! 2,\! 4,\! 2,\! 4,\! 2,\! 4,\! 6,\! 8,\! 2,\! 4,\! 6,\! 8,\! 4, 6,\! 8,\! 12,\! 4,\! 8,\! 12,\! 16,\! 4,\! 8,\! 12,\! 16,\! 8,\! 16,\! 
8,\! 16,\! 16,\! 32 & 526                       & 13608          \\ \hline
\end{tabular}
}
\end{table*}

Although the kernels from Table \ref{tableMax} have excellent polarization parameters, their practical usage is limited due to the tremendous processing complexity. We propose to consider polarization kernels with degraded polarization parameters and non-monotonic partial distances. This is motivated by the fact that the trellis state complexity of an $(n, k, d)$ linear code is lower bounded \cite{muder1988minimal}. That is, by degrading the error exponent and permuting the partial distances, we reduce the minimal distance of the corresponding kernel codes, thus reducing the lower bound on trellis state complexity. For example, such kernels of size 16 and 32 were proposed in \cite{trofimiuk2019construction32} and \cite{trofimiuk2021window}.

We propose to use the algorithm given in Section \ref{ssBounds} to obtain several PDPs corresponding to a reduced error exponent. Then we permute some PDP entries and try to construct the corresponding kernels using Alg. \ref{alg_full}.

The results of complexity minimization are presented in Table \ref{tableCompl}. The proposed kernels can be found in Appendix. 

\section{Numeric results}
\label{sNumberic}

Several mixed-kernel (MK) polar subcodes with obtained kernels were constructed. Their performance was investigated for the case of AWGN\ channel with BPSK modulation. The sets of frozen symbols were obtained by the method proposed in \cite{trifonov2019construction}. For MK polar subcodes we used different order of kernels and present the results for the best one.

\begin{figure}[ht]
\centering
\begin{subfigure}[b]{0.4\textwidth}
\centering
\resizebox{!}{0.67\linewidth}{
\input{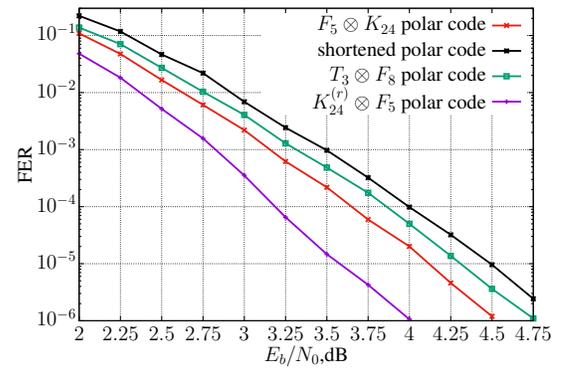}}
\caption{$(768, 384)$ polar codes}
\label{fSC768}
\end{subfigure}

\begin{subfigure}[b]{0.4\textwidth}
\resizebox{!}{0.67\linewidth}{
\input{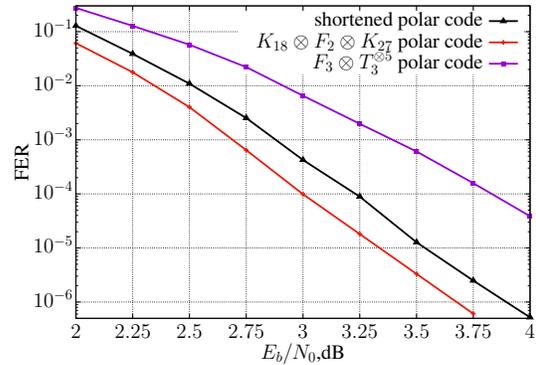}}
\caption{$(1944, 972)$ polar codes}
\label{fSC1944}
\end{subfigure}
\caption{Successive cancellation decoding performance}
\label{fSCAll}
\end{figure}

\begin{figure}[ht]
\centering
\resizebox{!}{0.7\linewidth}{
\input{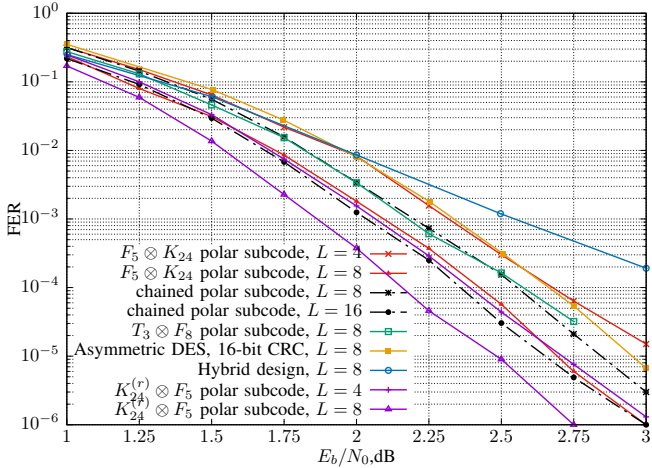}}
\caption{Performance of $(768, 384)$ polar subcodes under list decoding}
\label{f768Search}
\end{figure}

Figure \ref{fSC768} presents the performance of different MK polar codes under successive cancellation (SC) decoding. We report the results for the randomized MK polar codes with the proposed $K_{24}$ and $K_{24}^{(r)}$ kernels, $2^t \times 2^t$ Ar{\i}kan matrix $
F_t = \left(
\arraycolsep=1.15pt\def\arraystretch{0.5}
\begin{array}{cc}
1&0\\
1&1
\end{array} \right)^{\otimes t}
$
 and $3\times 3$ kernel $T_3$ introduced in \cite{bioglio2020multikernel}. For comparison we also added the performance of the shortened polar code. It can be observed that polar codes with the proposed large polarization kernels indeed provide performance gain due to improved polarization properties. 

Figure \ref{f768Search} illustrates the performance of different $(768, 384)$ polar codes decoded by the SC list (SCL) algorithm \cite{tal2015list} with list size $L$. For list decoding we use the randomized MK polar subcodes \cite{trifonov2017randomized}, \cite{trifonov2019construction}. We also include the results for the descending (DES) asymmetric construction with 16-bit CRC \cite{cavatassi2019asymmetric} and the improved hybrid design \cite{bioglio2019improved} of the MK polar codes together with the performance of the randomized chain polar subcodes \cite{trifonov2018randomized}. It can be seen that polar codes with $F_5 \otimes K_{24}$ polarizing transform under SCL decoding with $L = 8$ outperform the chained and $T_3 \otimes F_8$ polar subcodes, asymmetric DES, and hybrid construction under SCL with the same list size. Moreover, it provides almost the same performance as chained polar subcode decoded with $L = 16$. Observe that despite of having degraded polarization parameters (compared with $K_{24}^{(r)}$ kernel), kernel $K_{24}$ still provides a noticeable performance gain compared with other code constructions.


It can be seen that due to the excellent polarization properties of $K_{24}^{(r)}$, polar subcode with $K_{24}^{(r)} \otimes  F_5$ under SCL with $L = 4$ provides the same performance as $F_5 \otimes K_{24}$ with $L = 8$. 
\begin{figure}[ht]
\centering
\resizebox{!}{0.7\linewidth}{\input{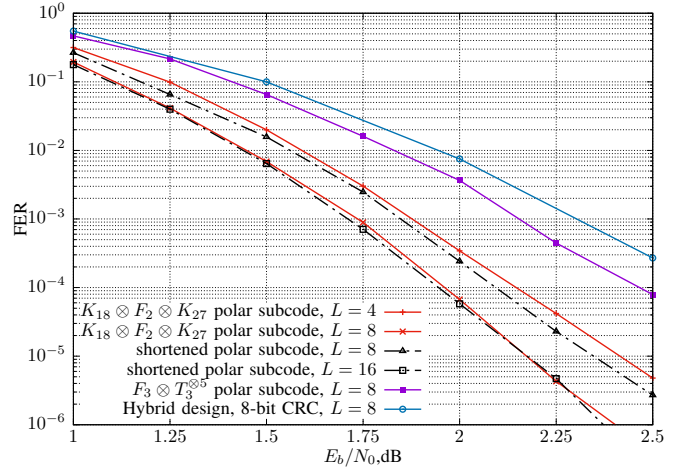}}
\caption{Performance of $(1944, 972)$ polar codes}
\label{f1944}
\end{figure}

Figure \ref{f1944} presents the performance of various $(1944, 972)$ polar codes. It can be seen that the polar subcode with $K_{18} \otimes F_{2} \otimes K_{27}$ polarizing transform under SCL decoding with $L = 4$ provides significant performance gain compared to the MK polar subcode with $F_3 \otimes T_3^{\otimes 5}$ polarizing transform and hybrid design with 8-bit CRC \cite{bioglio2020multikernel}. Moreover,  $K_{18} \otimes F_{2} \otimes K_{27}$ polar subcode with $L = 8$ has the same performance as shortened polar subcode decoded with $L = 16$. We should also point out that despite $E(K_{18}) <\ 0.5$ and $E(K_{27}) <\ 0.5$, the resulting code achieves a significant performance gain compared with shortened polar code with $F_{11}$ having $E(F_{11}) = 0.5$.  It follows that kernels with sizes $l \ne 2^t$ and even $E(K) <\ 0.5$ should be studied and can be considered in mixed kernel polar codes, since polar codes with such kernels outperform Ar{\i}kan-based polar code constructions.

It can be observed on Figure \ref{fSC1944} that $K_{18} \otimes F_{2} \otimes K_{27}$ polar code also outperforms other code constructions under SC\ decoding.

\section{Conclusion and open problems}
In this paper a novel search method for large polarization kernels with given partial distance profile was proposed. This algorithm is based on depth-first search over sets of candidate rows. The proposed algorithm was used to improve the lower bounds on the maximal error exponent of sizes $17 \leq l \leq 29$. 

We also demonstrated application of the proposed method for finding kernels which admit low complexity processing by the recursive trellis processing algorithm. Numerical results show the advantage of the polar codes with the obtained kernels compared with shortened polar codes with the Ar{\i}kan kernel and polar codes with small kernels.

In the following, we outline multiple open problems with large kernels:
\begin{itemize}
\item \textit{Tight bounds on partial distances.} There is a significant gap between the lower and upper bounds of the maximum error exponent of $l \times l$ kernels for large $l$. Moreover, we have almost no bounds for non-monotonic partial distances, except the LP bound.

\item The problem of \textit{scaling exponent minimization} remains unsolved. It is unknown whether the proposed method can be extended to explicitly solve it.
\item \textit{Processing complexity.} For some PDP the proposed algorithm can produce kernels which admit low complexity processing by RTPA. However, it is unknown how to obtain kernel with given PDP and minimum processing complexity.
\end{itemize}

\section*{Acknowledgments}
The author thanks the editor and the reviewers for their beneficial
 comments.

\appendix
\section{Proposed polarization kernels}
\label{appendixKern}


Figures \ref{fBestKernels17}, \ref{fBestKernels21} and \ref{fBestKernels27} presents  polarization kernels with the best error exponent, obtained in Section \ref{sMaximization}. Figure \ref{fSimpleKernels} presents  polarization kernels which admit low processing complexity, obtained in Section \ref{sKernComplexity}.

\begin{figure*}[ht]
\centering
\scalebox{0.8}{
{ 
$
\arraycolsep=1.2pt\def\arraystretch{0.6}
\begin{array}{ccc}
{
K_{17}^{\ast} = \left(\begin{array}{ccccccccccccccccc}
0&1&0&0&0&0&0&0&0&0&0&0&0&0&0&0&0\\
1&0&0&0&0&0&0&0&0&0&0&0&0&0&0&0&0\\
0&0&1&1&0&0&0&0&0&0&0&0&0&0&0&0&0\\
0&0&0&0&0&1&0&0&0&0&0&1&0&0&0&0&0\\
0&1&1&0&0&0&0&0&0&0&0&0&0&0&0&0&0\\
1&1&0&0&0&0&1&0&0&0&0&0&0&0&0&0&0\\
0&0&0&1&1&1&1&0&0&0&0&0&0&0&0&0&0\\
0&0&0&0&0&1&1&1&1&0&0&0&0&0&0&0&0\\
0&0&0&0&0&0&0&0&0&1&1&0&0&0&0&1&1\\
1&0&0&0&0&1&0&0&1&0&0&0&1&0&0&0&1\\
0&0&0&1&1&1&0&0&1&1&1&0&0&0&0&0&0\\
1&0&0&1&0&1&1&1&0&1&0&0&1&0&0&0&0\\
0&0&0&1&1&1&1&0&0&0&1&0&1&0&1&1&0\\
0&0&0&0&0&1&1&1&1&0&0&1&1&0&0&1&1\\
0&0&1&1&0&1&0&1&0&1&1&0&0&0&0&1&1\\
0&1&0&0&1&0&1&0&1&1&1&0&0&0&0&1&1\\
0&1&1&1&1&1&1&1&1&1&1&1&1&1&1&1&1\\
\end{array}\right)
}
&
{
K_{18}^{\ast} = \left(\begin{array}{cccccccccccccccccc}
0&0&0&0&0&1&0&0&0&0&0&0&0&0&0&0&0&0\\
0&1&0&0&0&0&0&0&0&0&0&0&0&0&0&1&0&0\\
0&0&1&0&0&0&0&0&0&0&0&0&1&0&0&0&0&0\\
0&0&0&0&0&0&0&0&0&0&0&0&1&0&0&0&0&1\\
0&0&0&0&0&1&0&0&0&0&0&0&0&0&0&0&1&0\\
0&0&0&0&0&0&0&0&0&0&0&0&0&0&0&0&1&1\\
1&0&0&0&1&0&1&0&0&0&1&0&0&0&0&0&0&0\\
1&0&0&1&0&0&1&0&0&0&0&0&0&1&0&0&0&0\\
0&0&0&0&0&0&1&0&0&0&0&1&0&1&0&0&1&0\\
0&1&0&0&1&0&1&0&1&0&1&0&0&1&0&0&0&0\\
1&1&0&0&0&0&0&0&0&0&1&0&0&1&1&0&1&0\\
1&0&0&1&0&0&0&1&1&0&1&0&0&1&0&0&0&0\\
0&0&0&1&0&1&1&1&0&0&0&1&0&0&0&0&1&0\\
1&1&1&0&0&1&0&0&0&0&1&0&0&0&0&1&1&1\\
0&1&0&1&0&0&0&0&1&0&1&0&0&1&1&1&0&1\\
0&1&1&1&0&1&1&1&0&1&1&0&0&1&0&0&1&0\\
0&0&0&0&1&1&1&0&0&0&1&0&1&1&1&1&1&1\\
1&1&0&1&1&0&1&1&1&0&0&1&0&1&1&1&1&0\\
\end{array}\right)
} 
&
{
K_{19}^{\ast} = \left(\begin{array}{ccccccccccccccccccc}
0&0&0&0&0&0&0&0&0&0&0&0&0&0&1&0&0&0&0\\
0&0&0&0&0&0&0&0&0&0&0&0&1&0&0&0&0&0&1\\
1&0&0&0&0&0&1&0&0&0&0&0&0&0&0&0&0&0&0\\
0&0&0&0&0&0&1&1&0&0&0&0&0&0&0&0&0&0&0\\
0&0&0&0&0&0&0&0&0&0&0&0&0&0&0&0&1&0&1\\
0&0&0&0&0&0&0&0&0&0&0&0&0&0&0&0&0&1&1\\
1&0&0&0&0&1&0&0&0&0&0&0&1&0&0&0&0&0&1\\
0&0&1&0&0&0&0&0&1&0&0&0&1&1&0&0&0&0&0\\
0&0&0&0&0&0&0&0&0&0&0&1&1&0&0&0&1&1&0\\
0&0&0&0&0&0&0&0&0&0&0&0&1&0&1&0&1&0&1\\
0&0&1&0&1&0&0&0&1&0&1&0&1&0&0&0&1&0&0\\
0&0&1&0&0&0&0&0&1&1&1&1&0&1&0&0&0&0&0\\
0&0&0&1&1&0&0&0&1&1&0&1&0&0&1&0&0&0&0\\
0&0&1&0&0&0&0&1&1&0&0&1&1&1&0&0&1&0&1\\
0&0&1&1&0&0&0&0&0&1&1&0&0&1&0&1&1&1&0\\
0&1&1&0&0&0&0&0&1&0&1&0&0&1&1&1&0&0&1\\
0&1&1&1&0&1&1&1&0&0&0&1&0&1&0&1&0&0&1\\
1&1&0&0&1&0&1&1&1&1&1&1&0&1&0&0&0&0&0\\
0&1&1&1&1&0&1&1&1&1&1&0&1&1&1&1&1&1&1\\
\end{array}\right)
}
\\
{
K_{20}^{\ast} = \left(\begin{array}{cccccccccccccccccccc}
0&0&0&0&0&0&1&0&0&0&0&0&0&0&0&0&0&0&0&0\\
0&0&0&0&0&1&0&1&0&0&0&0&0&0&0&0&0&0&0&0\\
0&0&0&0&1&0&0&0&0&1&0&0&0&0&0&0&0&0&0&0\\
0&0&0&0&0&0&0&1&0&1&0&0&0&0&0&0&0&0&0&0\\
0&1&0&0&0&0&0&0&0&0&0&0&0&0&0&0&0&0&0&1\\
0&0&0&0&0&0&1&0&0&0&0&0&0&0&0&0&0&0&0&1\\
0&0&0&0&0&0&1&0&1&1&0&0&0&0&0&0&0&0&1&0\\
0&0&0&0&0&1&0&0&0&0&0&0&0&1&1&0&0&1&0&0\\
0&1&0&0&0&0&1&0&0&0&0&0&0&0&0&0&1&0&0&1\\
0&0&0&0&1&1&1&1&0&0&0&0&0&0&0&0&0&0&0&0\\
0&0&0&0&1&1&0&1&0&0&0&0&1&0&0&1&0&1&0&0\\
0&0&1&0&0&0&0&1&1&0&0&0&0&1&0&0&1&0&1&0\\
0&0&0&0&1&1&0&0&1&1&0&1&0&1&0&0&1&0&1&0\\
0&0&1&0&1&1&0&1&0&0&1&0&0&1&0&0&1&1&0&0\\
1&0&1&0&0&0&0&0&0&1&1&0&0&1&0&1&0&1&0&1\\
0&1&1&0&0&1&1&0&0&0&1&0&0&1&1&1&0&0&0&0\\
0&1&0&0&0&1&0&0&0&1&0&0&1&1&0&1&1&1&0&0\\
0&0&1&0&1&1&0&1&1&0&0&0&1&0&0&0&0&0&1&1\\
1&1&1&1&0&0&0&1&1&1&1&0&0&1&1&1&0&0&1&0\\
1&1&1&1&1&1&1&0&1&1&1&1&1&0&1&0&1&1&0&1\\
\end{array}\right)
}
&
{
K_{21}^{\ast} = \left(\begin{array}{ccccccccccccccccccccc}
0&0&0&0&0&0&0&0&0&0&0&0&0&0&0&0&1&0&0&0&0\\
0&0&0&0&0&0&0&0&0&0&0&0&0&0&0&1&0&0&1&0&0\\
0&0&0&1&0&0&0&0&0&0&0&0&0&0&0&1&0&0&0&0&0\\
0&0&0&0&0&0&0&0&1&0&0&0&0&0&0&0&0&1&0&0&0\\
0&0&0&0&0&0&0&0&1&1&0&0&0&0&0&0&0&0&0&0&0\\
0&0&1&0&0&0&0&0&0&0&0&0&0&0&0&0&0&0&0&1&0\\
0&0&1&1&0&0&0&0&1&0&0&0&0&1&0&0&0&0&0&0&0\\
0&1&0&0&1&0&0&1&0&0&0&0&0&0&0&0&0&0&0&1&0\\
0&0&1&0&1&0&0&0&0&1&0&0&0&1&0&0&0&0&0&0&0\\
0&0&0&0&1&0&0&0&0&0&0&0&0&1&0&0&1&0&0&1&0\\
0&0&0&1&0&1&0&0&0&0&1&0&0&1&1&0&0&0&1&0&0\\
0&0&0&0&0&1&0&1&0&0&1&0&1&0&0&1&0&0&0&1&0\\
0&0&0&1&0&0&0&0&1&0&1&0&0&0&1&1&0&0&0&1&0\\
0&1&1&1&0&0&0&0&0&0&1&0&0&0&0&1&1&0&0&0&0\\
0&0&1&1&0&0&0&1&1&0&1&0&1&0&0&1&0&0&0&0&1\\
0&0&0&1&0&0&0&0&0&0&0&1&1&0&0&1&1&0&1&1&1\\
1&1&1&1&0&1&0&0&1&0&0&0&1&0&0&1&0&0&1&1&0\\
0&1&0&1&0&0&0&1&1&0&0&1&0&0&1&0&1&1&1&0&1\\
0&0&1&0&1&1&0&0&0&1&1&1&1&0&0&0&1&0&1&0&1\\
1&0&1&1&0&0&1&1&1&1&1&1&1&0&0&1&1&1&1&0&0\\
0&1&0&1&1&1&0&1&0&0&1&1&1&1&1&1&0&0&1&1&1\\
\end{array}\right)
} 
&
{
K_{22}^{\ast} = \left(\begin{array}{cccccccccccccccccccccc}
0&0&0&0&0&0&0&0&0&0&0&0&1&0&0&0&0&0&0&0&0&0\\
0&0&0&0&0&0&0&0&0&1&0&0&0&0&1&0&0&0&0&0&0&0\\
0&0&1&0&0&0&0&0&0&0&0&0&0&0&1&0&0&0&0&0&0&0\\
0&0&0&0&0&0&0&0&0&1&0&0&0&0&0&0&0&0&0&0&1&0\\
0&1&0&0&0&0&0&0&0&0&0&0&0&0&0&1&0&0&0&0&0&0\\
1&0&0&0&0&0&0&0&0&0&0&0&1&0&0&0&0&0&0&0&0&0\\
0&0&1&0&0&0&0&0&0&0&0&0&1&0&0&1&0&0&0&1&0&0\\
0&0&0&1&0&0&0&0&0&0&1&0&0&0&0&0&1&0&0&0&1&0\\
0&0&0&0&0&1&0&0&0&1&0&0&1&0&0&0&0&1&0&0&0&0\\
0&0&0&0&1&1&0&0&0&0&0&0&1&0&0&0&0&0&0&0&0&1\\
1&0&0&0&0&0&1&1&0&1&0&0&0&1&0&0&1&0&0&0&0&0\\
0&0&0&0&0&0&0&0&0&0&0&0&1&1&1&0&1&1&0&1&0&0\\
1&0&0&0&0&0&1&0&0&0&0&0&1&0&0&0&1&0&0&1&0&1\\
1&0&0&1&0&1&0&0&0&0&0&0&1&0&1&0&1&0&0&0&0&0\\
1&1&0&0&0&0&0&0&0&1&0&1&1&1&0&0&1&0&0&1&0&0\\
1&0&1&0&0&0&1&0&0&0&1&1&0&0&0&0&0&1&0&1&0&1\\
0&0&1&0&1&0&0&1&0&0&1&1&0&0&0&0&1&0&1&0&0&1\\
1&0&1&0&1&0&0&0&1&0&1&1&1&0&1&0&1&0&0&1&0&0\\
0&0&1&0&1&1&1&0&0&0&0&1&0&1&1&0&0&0&1&1&0&1\\
0&1&1&1&0&0&0&0&0&0&1&1&0&1&1&0&1&1&1&0&0&0\\
0&1&1&1&0&0&1&1&1&0&1&0&0&0&1&1&0&0&1&1&0&1\\
1&1&1&1&1&1&1&1&0&1&1&1&1&1&1&0&1&1&1&1&1&1\\
\end{array}\right)
} 
\\
\end{array}
$
}}
\caption[]{Polarization kernels with the best error exponent of sizes 17-22}
\label{fBestKernels17}
\end{figure*}

\begin{figure*}[ht]
\centering
\scalebox{0.8}{
{ 
$
\arraycolsep=1.2pt\def\arraystretch{0.6}
\begin{array}{cc}

{
K_{23}^{\ast} = \left(\begin{array}{ccccccccccccccccccccccc}
0&0&0&0&0&0&0&0&0&0&0&0&0&0&1&0&0&0&0&0&0&0&0\\
0&0&0&0&1&0&0&0&0&0&0&0&0&0&0&0&0&1&0&0&0&0&0\\
0&0&0&0&0&0&0&1&0&0&0&0&0&0&0&0&0&1&0&0&0&0&0\\
0&0&0&0&0&0&0&0&0&0&0&0&0&1&0&1&0&0&0&0&0&0&0\\
0&0&0&0&0&0&0&0&0&0&0&0&0&0&1&1&0&0&0&0&0&0&0\\
0&0&0&0&0&0&0&0&0&0&0&1&0&0&1&0&0&0&0&0&0&0&0\\
0&0&0&0&1&0&0&0&0&0&1&0&0&0&0&0&1&0&0&0&0&0&1\\
0&0&0&0&0&0&1&0&0&1&0&0&0&0&0&0&1&0&0&0&0&1&0\\
0&0&0&0&0&0&0&0&0&0&0&1&1&0&0&1&0&0&0&0&0&1&0\\
0&0&0&0&1&0&0&1&0&0&0&1&1&0&0&0&0&0&0&0&0&0&0\\
0&0&0&0&1&1&0&0&0&0&0&1&1&0&0&0&0&0&0&1&0&0&1\\
0&1&0&1&0&0&0&0&0&0&0&1&0&0&1&0&0&1&0&0&0&0&1\\
0&0&0&1&0&0&1&1&1&0&0&0&1&0&0&0&0&0&0&0&0&0&1\\
1&1&1&0&0&1&0&0&0&0&0&0&0&0&0&1&0&0&0&0&0&0&1\\
0&0&0&1&1&1&0&0&0&0&1&0&0&0&0&0&1&1&1&0&1&0&0\\
0&1&0&0&0&1&0&0&0&0&0&0&0&0&1&0&1&1&0&1&1&1&0\\
0&0&0&0&1&1&0&0&1&0&1&0&1&0&0&0&0&0&0&1&1&1&0\\
1&1&0&1&1&0&0&0&0&0&0&0&1&0&0&0&1&1&0&1&1&0&1\\
0&1&1&1&0&0&1&1&1&0&0&0&0&0&0&1&1&0&0&1&0&1&0\\
0&0&1&1&1&1&1&1&0&0&0&1&0&0&0&0&1&1&0&0&0&0&1\\
1&1&1&0&0&1&1&0&0&1&0&0&0&1&0&0&1&1&1&1&0&0&1\\
1&1&1&1&0&0&0&1&1&1&1&1&0&0&0&1&1&1&0&0&1&0&1\\
0&1&1&1&1&1&1&0&1&0&1&0&1&0&1&0&1&0&1&1&1&1&1\\

\end{array}\right)
}
&
{
K_{24}^{\ast} = \left(\begin{array}{cccccccccccccccccccccccc}
0&0&0&0&0&0&0&0&0&0&0&0&0&0&0&0&0&0&0&0&0&1&0&0\\
0&0&0&0&0&0&0&0&0&0&0&0&0&0&0&0&0&0&0&0&0&0&1&1\\
0&0&0&0&0&0&0&0&0&0&0&0&0&0&0&0&0&0&0&1&0&1&0&0\\
0&0&0&0&0&1&0&0&0&0&0&1&0&0&0&0&0&0&0&0&0&0&0&0\\
0&1&0&0&0&0&0&0&0&0&0&0&0&0&0&0&1&0&0&0&0&0&0&0\\
0&1&0&0&0&0&0&0&0&0&1&0&0&0&0&0&0&0&0&0&0&0&0&0\\
0&0&1&0&0&0&0&0&1&0&0&0&1&0&0&0&0&0&0&0&1&0&0&0\\
0&0&0&0&1&1&0&0&0&1&1&0&0&0&0&0&0&0&0&0&0&0&0&0\\
0&0&0&0&0&1&0&1&0&0&0&0&0&1&1&0&0&0&0&0&0&0&0&0\\
0&0&0&0&1&0&0&0&0&0&1&0&0&0&0&0&1&0&0&0&0&1&0&0\\
0&1&0&0&1&0&0&0&0&0&0&0&0&1&0&0&1&0&0&0&0&0&0&0\\
0&0&0&0&1&0&1&0&0&0&0&0&0&0&1&0&0&1&1&0&0&0&1&0\\
0&0&1&0&0&0&1&0&0&0&0&0&0&0&0&0&1&1&0&1&0&0&1&0\\
0&1&0&0&1&0&1&1&0&0&0&0&0&0&0&0&1&0&1&0&0&0&0&0\\
1&1&0&0&0&0&0&0&0&0&0&0&0&0&0&1&0&1&1&1&0&1&1&0\\
1&1&0&0&0&0&0&0&0&0&0&0&1&0&1&0&1&0&1&1&1&0&0&0\\
0&0&0&0&1&0&0&0&0&0&1&0&1&0&1&1&0&0&1&1&0&0&1&0\\
0&1&0&0&0&0&0&0&1&0&1&1&0&1&1&0&0&0&1&0&0&1&0&0\\
0&1&0&0&0&0&1&1&0&0&1&0&0&1&0&0&1&1&1&1&0&0&1&0\\
1&1&0&0&0&0&0&1&0&1&1&1&1&0&1&1&1&0&0&0&1&1&0&0\\
1&1&0&0&0&1&1&0&0&0&1&1&0&1&0&1&0&1&0&0&1&0&1&1\\
1&0&1&0&0&0&0&1&0&0&1&0&1&1&0&0&0&1&1&1&1&1&0&1\\
1&0&1&1&0&1&0&0&1&1&1&1&1&1&1&1&0&0&0&1&0&1&1&1\\
1&1&0&0&1&0&1&1&1&0&0&0&1&0&1&1&1&1&1&1&1&0&1&1\\
\end{array}\right)
} 
\\
{
K_{25}^{\ast} = \left(\begin{array}{ccccccccccccccccccccccccc}
0&0&0&0&0&0&0&0&0&1&0&0&0&0&0&0&0&0&0&0&0&0&0&0&0\\
0&0&0&0&0&0&0&0&1&0&0&1&0&0&0&0&0&0&0&0&0&0&0&0&0\\
0&0&0&0&0&1&0&0&0&0&0&0&0&0&0&0&0&0&0&0&0&0&1&0&0\\
0&0&0&0&0&0&0&1&0&0&0&0&0&0&0&0&0&0&0&0&0&0&1&0&0\\
0&0&0&0&0&0&0&1&0&0&0&1&0&0&0&0&0&0&0&0&0&0&0&0&0\\
0&0&0&0&0&0&0&0&0&1&0&0&0&0&0&0&0&0&0&0&0&0&0&1&0\\
0&0&1&0&0&0&0&0&0&0&0&0&0&1&0&1&1&0&0&0&0&0&0&0&0\\
0&0&0&1&0&0&1&0&0&1&0&0&0&0&0&0&0&0&0&0&0&0&1&0&0\\
0&0&0&0&0&0&0&1&0&0&0&0&1&0&0&1&0&0&0&0&0&0&1&0&0\\
0&0&0&0&0&0&0&0&0&0&0&0&0&0&0&0&0&0&1&1&0&1&0&1&0\\
0&0&0&0&0&1&0&1&0&1&1&0&0&0&0&0&0&0&0&0&0&0&0&0&0\\
1&0&0&0&0&0&0&1&0&1&0&0&0&0&0&0&0&1&0&0&0&1&0&0&1\\
0&0&0&0&0&0&0&0&1&1&1&1&0&0&0&0&0&0&0&0&0&0&1&1&0\\
0&0&0&0&0&0&0&1&0&0&0&1&0&0&1&0&0&0&1&0&0&1&0&0&1\\
0&0&1&0&0&0&1&0&0&0&0&0&1&1&0&0&0&1&1&1&0&1&0&0&0\\
0&0&1&1&0&0&0&1&0&0&1&0&0&0&0&0&0&1&0&0&1&1&1&0&0\\
0&0&0&0&0&1&0&1&0&0&0&1&1&1&0&0&0&1&0&1&0&0&0&1&0\\
1&0&0&0&0&0&0&0&0&0&0&0&1&1&0&0&0&0&1&0&1&1&1&0&1\\
0&0&1&0&0&0&0&1&1&0&0&0&0&1&0&0&0&0&1&1&0&0&1&0&1\\
0&0&1&0&0&1&1&1&0&0&1&0&0&1&1&0&0&0&0&1&0&1&1&0&0\\
0&0&1&0&1&0&0&1&1&1&0&0&1&0&1&0&1&0&0&1&0&1&1&0&1\\
1&0&0&0&0&1&1&1&1&0&0&0&1&1&0&1&0&0&1&1&0&1&0&0&1\\
1&0&0&1&0&1&0&1&0&1&0&0&0&1&0&0&1&0&0&0&1&1&1&1&1\\
1&1&1&0&1&1&0&0&0&1&0&0&1&0&0&1&1&1&1&1&1&1&0&1&1\\
1&0&1&1&0&0&1&1&1&0&1&1&0&1&1&0&1&1&1&1&1&1&1&0&1\\
\end{array}\right)
} 
&
{
K_{26}^{\ast} = \left(
\begin{array}{cccccccccccccccccccccccccc}
0&0&0&0&0&0&0&0&0&0&0&0&0&0&0&0&1&0&0&0&0&0&0&0&0&0\\
0&1&0&0&0&0&0&0&0&0&0&0&0&0&0&0&1&0&0&0&0&0&0&0&0&0\\
0&1&0&0&0&0&0&0&0&0&0&0&0&0&0&0&0&0&0&0&0&0&0&0&1&0\\
0&0&0&0&0&0&0&0&0&0&0&0&0&0&0&0&0&0&0&0&1&1&0&0&0&0\\
0&0&0&0&0&0&0&0&0&0&0&0&0&0&0&0&0&0&0&0&1&0&0&0&1&0\\
0&0&0&0&0&0&0&0&0&0&0&1&0&0&0&0&0&0&0&1&0&0&0&0&0&0\\
0&0&0&0&0&0&0&0&0&0&0&0&1&0&0&1&0&1&0&0&0&0&0&0&0&1\\
0&0&0&0&1&0&0&0&0&0&1&0&0&0&0&0&0&0&0&0&1&0&0&0&1&0\\
0&0&1&0&0&0&0&0&0&1&0&1&0&0&1&0&0&0&0&0&0&0&0&0&0&0\\
0&0&0&0&0&0&0&0&1&0&0&0&0&0&1&0&1&1&0&0&0&0&0&0&0&0\\
0&1&0&0&0&0&0&1&0&0&0&0&0&0&0&0&0&1&0&0&0&0&0&0&1&0\\
1&0&0&1&0&0&1&0&0&1&0&0&0&0&0&0&0&1&0&0&0&0&0&0&0&1\\
0&0&0&1&1&0&0&1&0&0&0&0&0&1&0&0&0&0&0&0&1&0&0&0&0&1\\
0&0&1&1&0&0&1&0&0&1&0&0&0&0&0&0&0&0&0&1&0&0&0&0&1&0\\
1&1&0&1&1&0&0&0&0&0&0&0&0&0&0&0&0&0&0&1&0&0&0&0&1&0\\
0&0&0&0&0&0&1&0&1&0&0&1&0&0&0&0&0&1&1&0&1&0&0&1&0&1\\
0&0&0&1&1&1&0&0&0&0&1&0&1&1&0&0&0&0&1&0&0&0&0&0&1&0\\
1&0&0&1&1&0&0&0&0&0&0&0&0&1&0&0&0&1&0&1&0&1&0&1&0&0\\
1&1&1&1&0&1&0&0&0&0&0&1&1&0&0&0&0&1&0&1&0&0&0&0&0&1\\
0&1&0&0&1&1&1&0&0&0&1&1&1&0&0&0&0&0&1&1&0&0&0&1&0&0\\
0&1&1&1&0&0&0&0&0&0&1&1&0&1&0&1&1&1&1&1&0&0&0&1&0&0\\
0&0&1&0&0&1&0&0&1&0&0&1&0&0&1&0&1&1&0&1&0&1&0&1&1&1\\
1&0&1&1&1&0&0&0&0&1&0&0&1&1&1&0&1&0&0&1&0&0&0&1&0&1\\
0&0&1&0&1&1&0&1&0&0&1&1&1&1&1&0&1&0&1&0&0&1&0&0&0&0\\
1&0&1&0&0&1&1&1&1&1&1&1&1&0&1&1&1&0&0&0&0&0&1&1&0&1\\
1&1&1&1&1&1&1&0&0&0&1&1&1&1&1&0&0&1&1&1&1&1&0&1&1&1\\
\end{array}\right)
}

\end{array}
$
}}
\caption[]{Polarization kernels with the best error exponent of sizes 23-26}
\label{fBestKernels21}
\end{figure*}

\begin{figure*}[ht]
\centering
\scalebox{0.8}{
{ 
$
\arraycolsep=1.2pt\def\arraystretch{0.6}
\begin{array}{cc}
K_{27}^{\ast} = \left(\begin{array}{ccccccccccccccccccccccccccc}
0&0&0&0&0&0&0&0&0&0&0&0&0&0&0&0&0&0&0&0&0&0&0&1&0&0&0\\
0&1&0&0&0&1&0&0&0&0&0&0&0&0&0&0&0&0&0&0&0&0&0&0&0&0&0\\
0&0&0&0&0&0&0&0&0&0&1&0&0&0&0&0&0&0&0&0&0&0&0&0&1&0&0\\
0&0&0&0&0&0&0&0&0&0&0&0&0&0&1&0&0&0&0&0&0&0&0&1&0&0&0\\
0&0&0&0&0&0&0&0&0&0&0&0&1&0&0&0&0&0&0&0&0&0&0&1&0&0&0\\
0&0&0&0&0&0&0&0&0&0&0&0&0&0&0&0&0&0&0&0&0&0&1&1&0&0&0\\
0&0&0&0&0&0&0&0&0&1&0&0&0&0&1&1&0&0&0&0&0&0&1&0&0&0&0\\
0&1&0&0&0&0&0&0&0&0&0&0&0&0&0&0&0&1&0&1&1&0&0&0&0&0&0\\
0&0&1&0&1&0&0&0&0&0&0&0&1&0&0&0&1&0&0&0&0&0&0&0&0&0&0\\
0&0&0&0&1&0&1&0&0&0&0&1&0&0&0&0&1&0&0&0&0&0&0&0&0&0&0\\
0&0&0&0&1&1&0&0&0&0&0&0&0&0&1&0&0&0&0&0&0&0&1&0&0&0&0\\
1&1&0&0&0&0&1&0&0&1&1&0&0&1&0&0&0&0&0&0&0&0&0&0&0&0&0\\
0&0&0&0&0&0&0&0&0&1&0&0&0&1&1&0&0&0&0&0&0&0&1&1&0&1&0\\
0&1&0&0&1&1&0&0&0&0&0&0&0&0&0&0&0&0&0&0&0&1&0&0&1&1&0\\
0&1&0&1&0&0&1&0&0&1&0&0&1&0&0&0&0&0&0&0&0&0&0&1&0&0&0\\
0&0&0&0&0&1&0&0&0&0&0&0&1&1&1&1&0&0&0&0&0&1&0&1&0&1&0\\
0&0&0&0&0&0&0&1&0&1&0&0&1&0&0&0&0&1&0&0&0&0&1&0&1&1&1\\
0&0&0&0&0&0&1&0&1&0&1&0&1&1&1&0&0&0&0&0&0&1&1&0&0&0&0\\
0&0&0&0&1&0&0&0&0&1&1&0&0&0&0&0&0&0&1&1&0&1&1&0&1&1&1\\
0&1&1&0&0&0&0&1&0&1&0&1&0&0&0&1&1&0&0&0&0&1&0&0&1&1&0\\
1&1&1&1&0&0&1&1&0&0&0&0&0&0&0&1&0&1&0&0&0&0&0&0&1&0&1\\
0&0&0&1&0&0&0&1&0&1&1&0&0&1&1&1&0&0&0&0&1&1&1&0&1&1&0\\
0&1&0&1&0&0&0&0&1&0&1&1&1&0&0&0&1&0&0&0&0&1&1&0&1&1&1\\
1&0&1&1&1&0&1&0&1&1&1&1&0&1&0&0&0&0&0&0&0&1&1&0&0&0&0\\
0&0&1&0&0&0&1&1&1&0&0&0&0&1&0&1&1&0&0&0&0&1&1&1&1&0&1\\
1&0&0&0&1&0&1&0&1&0&1&1&1&0&0&1&0&1&0&1&1&1&1&1&1&0&1\\
1&1&1&1&0&1&1&1&1&1&1&1&0&1&1&1&1&1&0&0&0&1&0&0&1&1&1\\
\end{array}\right)
&
{
K_{28}^{\ast} = \left(\begin{array}{cccccccccccccccccccccccccccc}
0&0&0&1&0&0&0&0&0&0&0&0&0&0&0&0&0&0&0&0&0&0&0&0&0&0&0&0\\
0&0&0&1&0&0&0&0&0&0&0&0&0&0&0&1&0&0&0&0&0&0&0&0&0&0&0&0\\
0&0&0&0&0&0&0&0&0&0&0&0&0&0&0&0&0&0&0&1&0&0&1&0&0&0&0&0\\
0&0&0&0&0&0&0&0&0&0&0&0&0&0&0&1&0&0&0&0&0&0&1&0&0&0&0&0\\
0&0&0&0&0&0&0&0&0&0&0&0&0&0&0&0&0&0&0&0&0&1&1&0&0&0&0&0\\
0&0&1&1&0&0&0&0&0&0&0&0&0&0&0&0&0&0&0&0&0&0&0&0&0&0&0&0\\
0&0&1&0&0&0&1&0&0&0&0&1&0&0&0&0&0&0&0&0&1&0&0&0&0&0&0&0\\
0&0&0&1&1&0&0&0&0&0&0&0&0&0&0&0&0&0&0&0&1&0&1&0&0&0&0&0\\
0&0&0&1&0&0&0&0&0&0&0&1&0&0&0&1&0&0&0&0&0&1&0&0&0&0&0&0\\
0&0&0&0&0&1&0&0&0&0&0&1&0&0&0&1&0&0&0&0&0&0&1&0&0&0&0&0\\
0&0&0&0&0&0&1&0&1&1&0&0&0&0&0&0&0&0&0&1&0&0&0&0&0&0&0&0\\
0&0&0&0&0&1&0&0&0&0&0&1&0&0&1&0&0&0&0&1&1&0&0&0&0&0&1&0\\
1&0&1&1&0&0&0&0&1&1&0&0&0&0&0&0&0&0&0&0&1&0&0&0&0&0&0&0\\
0&0&0&0&0&1&0&0&1&1&0&1&0&0&1&1&0&0&0&0&0&0&0&0&0&0&0&0\\
0&1&0&0&0&0&0&1&0&1&0&1&0&0&0&1&0&0&0&0&0&0&0&0&0&0&1&0\\
0&1&0&0&1&1&0&0&0&0&0&1&0&0&1&0&0&0&0&0&0&1&0&0&0&0&0&0\\
0&0&0&1&1&1&0&0&0&1&0&0&1&0&0&0&0&0&1&0&1&0&1&0&0&0&0&0\\
0&1&1&0&0&0&0&0&0&0&0&1&0&0&1&1&1&0&0&0&1&0&1&0&0&0&0&0\\
1&1&0&0&1&1&1&1&1&1&0&0&0&0&0&0&0&0&0&0&0&0&0&0&0&0&0&0\\
0&0&1&0&1&0&0&0&1&0&0&0&1&0&0&1&1&0&0&0&1&1&0&0&1&0&1&0\\
0&0&1&1&0&0&1&1&1&0&0&1&0&0&0&0&0&0&0&1&1&0&1&0&0&0&1&0\\
1&0&0&0&1&1&0&1&1&0&1&1&1&0&1&0&0&0&0&0&0&1&0&0&0&0&0&0\\
1&1&0&0&0&0&1&1&0&0&1&1&1&0&0&1&0&0&0&0&0&1&0&0&1&1&1&0\\
0&1&0&0&1&1&1&0&1&1&1&1&0&0&0&1&1&0&0&0&0&1&1&0&0&0&0&0\\
0&0&1&0&0&1&0&1&0&1&1&0&1&0&1&1&1&0&1&0&0&1&1&1&1&0&0&0\\
1&1&1&0&1&0&0&1&0&1&1&1&0&1&0&1&0&0&1&0&1&0&1&0&0&0&1&0\\
0&0&1&0&1&1&1&0&1&0&0&1&0&1&1&1&0&0&1&1&1&1&1&1&0&0&0&1\\
1&1&1&1&1&1&1&1&1&1&1&1&1&0&1&1&1&1&0&1&1&1&1&0&1&1&1&0\\
\end{array}\right)
} 
\\
{
K_{29}^{\ast} = \left(
\begin{array}{ccccccccccccccccccccccccccccc}
0&0&1&0&0&0&0&0&0&0&0&0&0&0&0&0&0&0&0&0&0&0&0&0&0&0&0&0&0\\
0&0&0&0&0&0&0&0&0&0&0&0&0&0&0&0&0&0&0&1&0&1&0&0&0&0&0&0&0\\
0&0&0&1&0&0&0&0&0&0&0&0&0&0&0&1&0&0&0&0&0&0&0&0&0&0&0&0&0\\
0&0&0&0&0&1&0&0&0&0&0&0&0&0&0&0&0&0&0&0&0&0&0&0&1&0&0&0&0\\
0&0&0&0&0&0&0&0&0&0&0&0&0&0&0&0&0&0&0&1&0&0&0&1&0&0&0&0&0\\
0&1&0&0&0&0&0&0&0&0&0&0&0&0&0&0&1&0&0&0&0&0&0&0&0&0&0&0&0\\
0&0&1&0&0&0&0&0&1&0&0&0&1&0&0&0&0&0&0&0&0&0&0&0&1&0&0&0&0\\
0&0&0&0&0&0&1&0&1&0&1&0&0&0&0&0&0&0&0&0&0&0&0&1&0&0&0&0&0\\
0&1&0&0&0&1&0&0&0&0&0&0&0&0&0&0&0&0&1&0&0&0&0&0&1&0&0&0&0\\
0&0&1&0&0&0&0&0&0&0&0&1&0&0&0&0&0&0&0&0&0&1&0&0&0&0&1&0&0\\
0&0&0&0&0&0&0&0&0&0&0&0&0&0&0&0&0&0&0&0&0&0&0&0&1&1&1&1&0\\
0&0&0&1&0&0&0&0&1&1&1&0&0&0&0&0&0&0&0&0&0&0&0&0&0&0&0&0&0\\
0&0&0&1&0&0&0&0&0&1&0&0&0&1&0&0&0&1&0&0&1&0&0&0&0&0&0&1&0\\
1&0&0&0&0&0&0&0&1&1&0&1&0&0&0&1&1&0&0&0&0&0&0&0&0&0&0&0&0\\
0&1&0&0&0&1&1&0&0&0&0&0&0&0&0&0&0&0&0&1&0&0&0&1&1&0&0&0&0\\
0&0&0&0&0&0&0&0&0&0&0&0&0&0&0&0&1&0&1&1&0&0&0&1&1&1&0&0&0\\
1&1&0&0&0&0&1&0&0&1&0&0&0&1&1&1&0&0&1&0&0&0&0&0&0&0&0&0&0\\
0&0&0&1&0&0&0&0&1&1&1&0&0&0&0&0&0&0&0&1&1&0&1&0&0&1&0&0&0\\
0&0&0&0&0&0&0&0&0&0&0&0&0&0&0&0&0&0&1&1&1&1&0&1&1&0&1&1&0\\
0&0&0&0&1&0&0&0&0&0&1&0&0&1&1&1&0&0&0&1&0&0&1&0&1&0&1&1&0\\
0&0&0&1&0&1&0&0&0&1&1&0&0&0&1&0&1&1&0&0&0&0&1&0&1&1&0&0&0\\
1&0&0&1&0&0&0&0&0&0&1&0&1&0&0&1&0&0&1&1&0&0&1&0&0&0&0&1&1\\
1&1&1&0&0&1&0&0&1&0&0&0&0&0&1&0&1&0&0&0&1&0&0&1&0&1&0&1&1\\
0&1&0&1&0&1&0&1&0&0&1&0&0&0&1&0&1&1&0&1&1&1&0&0&0&0&0&1&0\\
0&0&0&0&0&1&1&1&1&0&0&1&1&0&1&0&1&1&1&1&0&1&0&1&0&0&0&0&1\\
0&1&1&1&0&0&0&1&0&0&0&0&0&1&1&0&1&0&1&0&1&1&0&1&1&0&1&0&1\\
1&1&0&1&1&0&1&0&0&0&1&1&1&1&0&0&1&0&1&1&0&1&1&0&0&0&1&0&1\\
1&1&1&1&0&0&1&1&1&1&1&0&1&0&1&1&1&1&0&0&1&0&0&0&0&0&0&0&1\\
1&0&0&1&0&1&0&1&0&0&1&0&1&0&1&1&0&1&1&1&1&1&1&1&1&1&1&1&1\\
\end{array}\right)
}
\end{array}
$
}}
\caption[]{Polarization kernels with the best error exponent of sizes 27-29}
\label{fBestKernels27}
\end{figure*}

\begin{figure*}[ht]
\centering
\scalebox{0.8}{
{ 
$
\arraycolsep=1.2pt\def\arraystretch{0.6}
\begin{array}{cc}
{
K_{18} = \left(\begin{array}{cccccccccccccccccc}
1&0&0&0&0&0&0&0&0&0&0&0&0&0&0&0&0&0\\
1&0&0&0&0&0&0&0&1&0&0&0&0&0&0&0&0&0\\
1&0&1&0&0&0&0&0&0&0&0&0&0&0&0&0&0&0\\
1&0&1&0&0&0&0&0&1&0&0&0&1&0&0&0&0&0\\
0&0&0&0&0&0&0&0&1&0&1&0&0&0&0&0&0&0\\
0&0&1&1&0&0&0&0&0&0&0&0&0&0&0&0&0&0\\
1&1&0&0&0&0&0&0&0&0&0&0&0&0&0&0&0&0\\
1&0&1&1&0&0&1&0&0&0&0&0&0&0&0&0&0&0\\
1&1&0&1&0&1&0&0&0&0&0&0&0&0&0&0&0&0\\
0&0&1&0&0&1&1&0&1&0&1&0&0&0&0&0&1&0\\
1&1&1&0&1&0&0&0&0&0&0&0&0&0&0&0&0&0\\
1&0&0&1&1&0&1&0&1&1&0&0&0&0&0&0&0&0\\
0&1&1&0&1&0&1&0&1&1&0&0&0&0&0&0&1&1\\
1&0&1&0&1&0&1&0&1&0&1&0&1&0&1&0&0&0\\
1&1&0&0&1&1&0&0&1&1&0&0&1&1&0&0&0&0\\
1&1&1&1&0&0&0&0&1&1&1&1&0&0&0&0&0&0\\
1&1&1&1&1&1&1&1&0&0&0&0&0&0&0&0&0&0\\
1&1&1&1&1&1&1&1&1&1&1&1&1&1&1&1&0&0\\
\end{array}\right)
}
&
{
K_{20} = \left(\begin{array}{cccccccccccccccccccc}
1&0&0&0&0&0&0&0&0&0&0&0&0&0&0&0&0&0&0&0\\
1&0&0&0&0&0&0&0&0&0&0&0&1&0&0&0&0&0&0&0\\
1&0&0&0&0&0&0&0&1&0&0&0&0&0&0&0&0&0&0&0\\
1&0&0&0&0&0&0&0&1&0&0&0&1&0&0&0&1&0&0&0\\
1&0&0&0&1&0&0&0&0&0&0&0&0&0&0&0&0&0&0&0\\
1&0&0&0&1&0&0&0&1&1&0&0&0&0&0&0&0&0&0&0\\
1&0&1&0&0&0&0&0&0&0&0&0&0&0&0&0&0&0&0&0\\
1&1&0&0&0&0&0&0&0&0&0&0&0&0&0&0&0&0&0&0\\
0&1&1&0&1&0&1&0&0&0&0&0&0&0&0&0&0&0&0&0\\
1&1&0&0&1&1&0&0&0&0&0&0&0&0&0&0&0&0&0&0\\
1&1&0&0&0&1&1&0&1&0&1&0&0&0&0&0&0&0&0&0\\
1&0&1&0&0&1&1&0&1&1&0&0&0&0&0&0&1&0&1&0\\
0&1&1&0&1&1&0&0&1&0&1&0&0&0&0&0&1&1&0&0\\
1&0&1&0&1&0&1&0&1&0&1&0&1&0&1&0&0&0&0&0\\
1&1&1&1&0&0&0&0&0&0&0&0&0&0&0&0&0&0&0&0\\
0&0&1&1&1&1&0&0&1&1&0&0&1&1&0&0&0&0&0&0\\
1&1&0&0&1&1&0&0&1&1&0&0&1&1&0&0&1&1&1&1\\
1&1&1&1&0&0&0&0&1&1&1&1&0&0&0&0&0&0&0&0\\
1&1&1&1&1&1&1&1&0&0&0&0&0&0&0&0&0&0&0&0\\
1&1&1&1&1&1&1&1&1&1&1&1&1&1&1&1&0&0&0&0\\
\end{array}\right)
}
\\
{
K_{24}^{(r)} = \left(
\begin{array}{cccccccccccccccccccccccc}
1&0&0&0&0&0&0&0&0&0&0&0&0&0&0&0&0&0&0&0&0&0&0&0\\
1&0&0&0&0&0&0&0&0&0&0&0&0&0&0&0&1&0&0&0&0&0&0&0\\
1&0&0&0&1&0&0&0&0&0&0&0&0&0&0&0&0&0&0&0&0&0&0&0\\
1&0&0&0&0&0&0&0&1&0&0&0&0&0&0&0&0&0&0&0&0&0&0&0\\
1&0&1&0&0&0&0&0&0&0&0&0&0&0&0&0&0&0&0&0&0&0&0&0\\
1&1&0&0&0&0&0&0&0&0&0&0&0&0&0&0&0&0&0&0&0&0&0&0\\
1&0&0&0&1&0&0&0&1&0&0&0&1&0&0&0&0&0&0&0&0&0&0&0\\
1&0&1&0&0&0&0&0&1&0&1&0&0&0&0&0&0&0&0&0&0&0&0&0\\
1&1&0&0&0&0&0&0&1&1&0&0&0&0&0&0&0&0&0&0&0&0&0&0\\
1&0&1&0&1&0&1&0&0&0&0&0&0&0&0&0&0&0&0&0&0&0&0&0\\
1&1&0&0&1&1&0&0&0&0&0&0&0&0&0&0&0&0&0&0&0&0&0&0\\
1&1&1&1&0&0&0&0&0&0&0&0&0&0&0&0&0&0&0&0&0&0&0&0\\
0&0&0&0&0&1&1&0&1&1&1&0&1&0&0&0&1&0&0&0&1&0&0&0\\
0&1&1&1&0&0&1&0&1&0&0&0&1&0&0&0&1&0&1&0&0&0&0&0\\
1&0&1&0&1&0&1&0&1&1&0&0&0&0&0&0&1&1&0&0&0&0&0&0\\
0&1&1&0&1&0&1&0&0&1&1&0&1&0&1&0&0&0&0&0&0&0&0&0\\
0&0&1&1&1&1&0&0&1&1&0&0&1&1&0&0&0&0&0&0&0&0&0&0\\
1&1&0&0&1&1&0&0&1&1&1&1&0&0&0&0&0&0&0&0&0&0&0&0\\
1&1&1&1&1&1&1&1&0&0&0&0&0&0&0&0&0&0&0&0&0&0&0&0\\
1&0&1&0&1&0&1&0&1&0&1&0&1&0&1&0&1&0&1&0&1&0&1&0\\
1&1&0&0&1&1&0&0&1&1&0&0&1&1&0&0&1&1&0&0&1&1&0&0\\
1&1&1&1&0&0&0&0&1&1&1&1&0&0&0&0&1&1&1&1&0&0&0&0\\
1&1&1&1&1&1&1&1&0&0&0&0&0&0&0&0&1&1&1&1&1&1&1&1\\
1&1&1&1&1&1&1&1&1&1&1&1&1&1&1&1&0&0&0&0&0&0&0&0\\
\end{array}\right)
} 
&
{
K_{24} = \left(
\begin{array}{cccccccccccccccccccccccc}
1&0&0&0&0&0&0&0&0&0&0&0&0&0&0&0&0&0&0&0&0&0&0&0\\
1&0&0&0&0&0&0&0&0&0&0&0&0&0&0&0&1&0&0&0&0&0&0&0\\
1&0&0&0&0&0&0&0&1&0&0&0&0&0&0&0&0&0&0&0&0&0&0&0\\
1&0&0&0&0&0&0&0&1&0&0&0&0&0&0&0&1&0&0&0&1&0&0&0\\
1&0&1&0&0&0&0&0&0&0&0&0&0&0&0&0&0&0&0&0&0&0&0&0\\
1&0&1&0&0&0&0&0&1&0&0&0&1&0&0&0&0&0&0&0&0&0&0&0\\
1&0&0&0&1&0&0&0&0&0&0&0&0&0&0&0&0&0&0&0&0&0&0&0\\
1&0&0&0&1&0&0&0&1&0&1&0&0&0&0&0&0&0&0&0&0&0&0&0\\
1&0&1&0&0&0&0&0&1&0&1&0&0&0&0&0&1&0&1&0&0&0&0&0\\
1&1&0&0&0&0&0&0&0&0&0&0&0&0&0&0&0&0&0&0&0&0&0&0\\
1&1&0&0&0&0&0&0&1&1&0&0&0&0&0&0&0&0&0&0&0&0&0&0\\
1&0&1&0&1&0&1&0&0&0&0&0&0&0&0&0&0&0&0&0&0&0&0&0\\
1&0&1&0&1&0&1&0&1&1&0&0&0&0&0&0&1&1&0&0&0&0&0&0\\
0&1&1&0&1&0&1&0&0&1&1&0&1&0&1&0&0&0&0&0&0&0&0&0\\
1&0&1&0&1&0&1&0&1&0&1&0&1&0&1&0&1&0&1&0&1&0&1&0\\
1&1&0&0&1&1&0&0&0&0&0&0&0&0&0&0&0&0&0&0&0&0&0&0\\
1&1&1&1&0&0&0&0&0&0&0&0&0&0&0&0&0&0&0&0&0&0&0&0\\
0&0&1&1&1&1&0&0&1&1&0&0&1&1&0&0&0&0&0&0&0&0&0&0\\
1&1&0&0&1&1&0&0&1&1&1&1&0&0&0&0&0&0&0&0&0&0&0&0\\
1&1&0&0&1&1&0&0&1&1&0&0&1&1&0&0&1&1&0&0&1&1&0&0\\
1&1&1&1&0&0&0&0&1&1&1&1&0&0&0&0&1&1&1&1&0&0&0&0\\
1&1&1&1&1&1&1&1&0&0&0&0&0&0&0&0&0&0&0&0&0&0&0&0\\
1&1&1&1&1&1&1&1&0&0&0&0&0&0&0&0&1&1&1&1&1&1&1&1\\
1&1&1&1&1&1&1&1&1&1&1&1&1&1&1&1&0&0&0&0&0&0&0&0\\
\end{array}\right)
}
\\
{
K_{27} = \left(
\begin{array}{ccccccccccccccccccccccccccc}
1&0&0&0&0&0&0&0&0&0&0&0&0&0&0&0&0&0&0&0&0&0&0&0&0&0&0\\
1&0&0&0&0&0&0&0&0&0&0&0&1&0&0&0&0&0&0&0&0&0&0&0&0&0&0\\
1&0&0&0&0&0&0&0&1&0&0&0&0&0&0&0&0&0&0&0&0&0&0&0&0&0&0\\
1&0&0&0&0&0&0&0&1&0&0&0&1&0&0&0&0&0&0&0&0&0&0&0&1&0&0\\
1&0&0&0&1&0&0&0&0&0&0&0&0&0&0&0&0&0&0&0&0&0&0&0&0&0&0\\
1&0&1&0&0&0&0&0&0&0&0&0&0&0&0&0&0&0&0&0&0&0&0&0&0&0&0\\
1&0&1&0&0&0&0&0&0&0&0&0&1&0&0&0&1&0&0&0&0&0&0&0&0&0&0\\
1&0&0&0&1&0&0&0&1&0&1&0&0&0&0&0&0&0&0&0&0&0&0&0&0&0&0\\
1&0&0&0&1&0&0&0&1&0&0&0&1&0&0&0&1&0&0&0&1&0&0&0&0&0&0\\
1&1&0&0&0&0&0&0&0&0&0&0&0&0&0&0&0&0&0&0&0&0&0&0&0&0&0\\
1&1&0&0&0&0&0&0&1&1&0&0&0&0&0&0&0&0&0&0&0&0&0&0&0&0&0\\
1&0&1&0&1&0&1&0&0&0&0&0&0&0&0&0&0&0&0&0&0&0&0&0&0&0&0\\
0&0&0&0&0&1&1&0&1&0&1&0&0&0&0&0&1&0&1&0&0&0&0&0&0&0&0\\
1&0&1&0&0&0&0&0&1&0&1&0&1&0&1&0&0&0&0&0&0&0&0&0&0&0&0\\
1&0&1&0&0&0&0&0&1&0&1&0&0&0&0&0&1&0&1&0&0&0&0&0&1&0&1\\
1&0&1&0&1&0&1&0&1&1&0&0&0&0&0&0&1&1&0&0&0&0&0&0&0&0&0\\
0&1&1&0&1&0&1&0&0&1&1&0&1&0&1&0&0&0&0&0&0&0&0&0&1&1&0\\
1&0&1&0&1&0&1&0&1&0&1&0&1&0&1&0&1&0&1&0&1&0&1&0&0&0&0\\
1&1&0&0&1&1&0&0&0&0&0&0&0&0&0&0&0&0&0&0&0&0&0&0&0&0&0\\
1&1&1&1&0&0&0&0&0&0&0&0&0&0&0&0&0&0&0&0&0&0&0&0&0&0&0\\
0&0&1&1&1&1&0&0&1&1&0&0&1&1&0&0&0&0&0&0&0&0&0&0&0&0&0\\
1&1&0&0&1&1&0&0&1&1&1&1&0&0&0&0&0&0&0&0&0&0&0&0&0&0&0\\
1&1&0&0&1&1&0&0&1&1&0&0&1&1&0&0&1&1&0&0&1&1&0&0&0&0&0\\
1&1&1&1&0&0&0&0&1&1&1&1&0&0&0&0&1&1&1&1&0&0&0&0&0&0&0\\
1&1&1&1&1&1&1&1&0&0&0&0&0&0&0&0&0&0&0&0&0&0&0&0&0&0&0\\
1&1&1&1&1&1&1&1&0&0&0&0&0&0&0&0&1&1&1&1&1&1&1&1&0&0&0\\
1&1&1&1&1&1&1&1&1&1&1&1&1&1&1&1&0&0&0&0&0&0&0&0&0&0&0\\
\end{array}\right)
}
&
{
K_{32}^{(r)} = \left(\begin{array}{cccccccccccccccccccccccccccccccc}
1&0&0&0&0&0&0&0&0&0&0&0&0&0&0&0&0&0&0&0&0&0&0&0&0&0&0&0&0&0&0&0\\
1&0&0&0&0&0&0&0&0&0&0&0&0&0&0&0&1&0&0&0&0&0&0&0&0&0&0&0&0&0&0&0\\
1&0&0&0&0&0&0&0&1&0&0&0&0&0&0&0&0&0&0&0&0&0&0&0&0&0&0&0&0&0&0&0\\
1&0&0&0&0&0&0&0&1&0&0&0&0&0&0&0&1&0&0&0&0&0&0&0&1&0&0&0&0&0&0&0\\
1&0&0&0&1&0&0&0&0&0&0&0&0&0&0&0&0&0&0&0&0&0&0&0&0&0&0&0&0&0&0&0\\
1&0&0&0&1&0&0&0&0&0&0&0&0&0&0&0&1&0&0&0&1&0&0&0&0&0&0&0&0&0&0&0\\
1&0&1&0&0&0&0&0&0&0&0&0&0&0&0&0&0&0&0&0&0&0&0&0&0&0&0&0&0&0&0&0\\
1&0&1&0&0&0&0&0&0&0&0&0&0&0&0&0&1&0&1&0&0&0&0&0&0&0&0&0&0&0&0&0\\
0&0&1&0&1&0&0&0&1&0&0&0&1&0&0&0&1&0&1&0&0&0&0&0&0&0&0&0&0&0&0&0\\
1&0&0&0&1&0&0&0&1&0&0&0&1&0&0&0&1&0&0&0&1&0&0&0&1&0&0&0&1&0&0&0\\
1&1&0&0&0&0&0&0&0&0&0&0&0&0&0&0&0&0&0&0&0&0&0&0&0&0&0&0&0&0&0&0\\
1&1&0&0&0&0&0&0&0&0&0&0&0&0&0&0&1&1&0&0&0&0&0&0&0&0&0&0&0&0&0&0\\
0&1&1&0&0&0&0&0&1&0&1&0&0&0&0&0&1&1&0&0&0&0&0&0&0&0&0&0&0&0&0&0\\
1&0&1&0&0&0&0&0&1&0&1&0&0&0&0&0&1&0&1&0&0&0&0&0&1&0&1&0&0&0&0&0\\
1&0&1&0&1&0&1&0&0&0&0&0&0&0&0&0&0&0&0&0&0&0&0&0&0&0&0&0&0&0&0&0\\
0&1&1&0&1&0&1&0&1&1&0&0&0&0&0&0&0&0&0&0&0&0&0&0&0&0&0&0&0&0&0&0\\
1&1&0&0&0&0&0&0&1&1&0&0&0&0&0&0&1&1&0&0&0&0&0&0&1&1&0&0&0&0&0&0\\
0&1&1&0&1&0&1&0&1&1&0&0&0&0&0&0&0&1&1&0&1&0&1&0&1&1&0&0&0&0&0&0\\
1&1&0&0&1&1&0&0&0&0&0&0&0&0&0&0&0&0&0&0&0&0&0&0&0&0&0&0&0&0&0&0\\
1&1&0&0&1&1&0&0&0&0&0&0&0&0&0&0&1&1&0&0&1&1&0&0&0&0&0&0&0&0&0&0\\
0&1&1&0&0&1&1&0&1&0&1&0&1&0&1&0&1&1&0&0&1&1&0&0&0&0&0&0&0&0&0&0\\
1&0&1&0&1&0&1&0&1&0&1&0&1&0&1&0&1&0&1&0&1&0&1&0&1&0&1&0&1&0&1&0\\
1&1&1&1&0&0&0&0&0&0&0&0&0&0&0&0&0&0&0&0&0&0&0&0&0&0&0&0&0&0&0&0\\
1&1&1&1&0&0&0&0&0&0&0&0&0&0&0&0&1&1&1&1&0&0&0&0&0&0&0&0&0&0&0&0\\
0&0&1&1&1&1&0&0&1&1&0&0&1&1&0&0&1&1&1&1&0&0&0&0&0&0&0&0&0&0&0&0\\
1&1&0&0&1&1&0&0&1&1&0&0&1&1&0&0&1&1&0&0&1&1&0&0&1&1&0&0&1&1&0&0\\
1&1&1&1&0&0&0&0&1&1&1&1&0&0&0&0&0&0&0&0&0&0&0&0&0&0&0&0&0&0&0&0\\
1&1&1&1&0&0&0&0&1&1&1&1&0&0&0&0&1&1&1&1&0&0&0&0&1&1&1&1&0&0&0&0\\
1&1&1&1&1&1&1&1&0&0&0&0&0&0&0&0&0&0&0&0&0&0&0&0&0&0&0&0&0&0&0&0\\
1&1&1&1&1&1&1&1&0&0&0&0&0&0&0&0&1&1&1&1&1&1&1&1&0&0&0&0&0&0&0&0\\
1&1&1&1&1&1&1&1&1&1&1&1&1&1&1&1&0&0&0&0&0&0&0&0&0&0&0&0&0&0&0&0\\
1&1&1&1&1&1&1&1&1&1&1&1&1&1&1&1&1&1&1&1&1&1&1&1&1&1&1&1&1&1&1&1\\

\end{array}\right)
}
\end{array}
$
}}
\caption[]{Polarization kernels which admit low complexity processing}
\label{fSimpleKernels}
\end{figure*}



\begin{IEEEbiography}[{\includegraphics[angle=270,width=\textwidth]{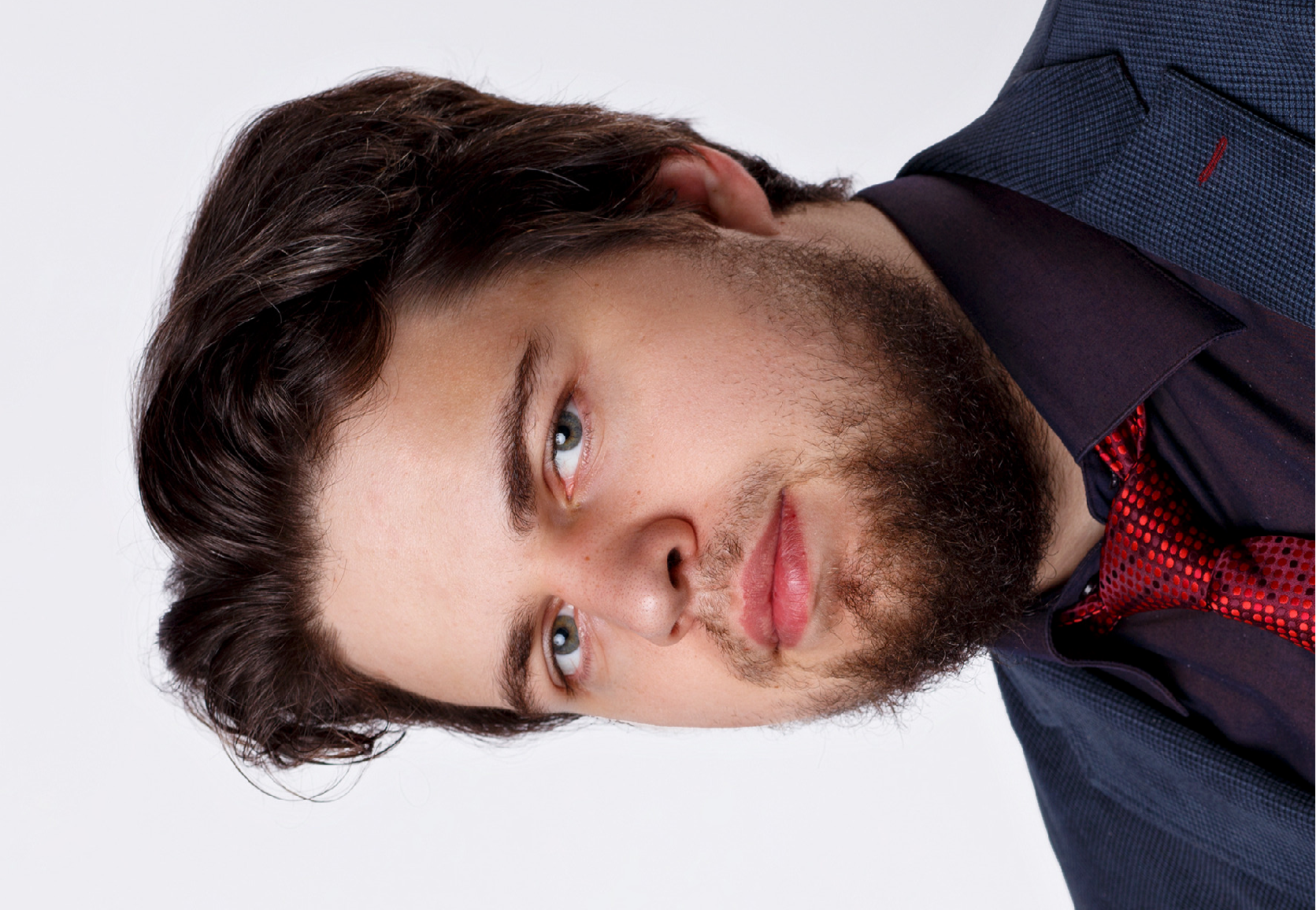}}]
{Grigorii Trofimiuk} was born in Boksitogorsk, Russia, in 1994. He received the B.Sc. and M.Sc. degrees in computer science from St.Petersburg Polytechnic University, in 2016 and 2018, respectively, and Ph.D. (Candidate of Science) degree in computer science from ITMO University in 2022. He is currently a Post-Doctoral Research Associate in telecommunications with the Laboratory of Information and Coding Theory, ITMO University, St.Petersburg, Russia. His research interest includes coding theory and its applications in telecommunications.
\end{IEEEbiography}

\end{document}